\newif\ifdeletegray
\algnewcommand\algorithmicwp{\textbf{with probability}}%
\newcounter{savealgorithm}
\newtheorem{lemma}{Lemma}[section]
\newtheorem{observation}[lemma]{Observation}
\newtheorem*{claim*}{Claim}
\newtheorem*{corollary*}{Corollary}
\newtheorem*{observation*}{Observation}
\newtheorem{speculative-lemma}{Lemma (Speculative)}
\newtheorem{speculative-theorem}[speculative-lemma]{Theorem (Speculative)}
\theoremstyle{definition}\newtheorem{definition}[lemma]{Definition}
\theoremstyle{definition}\newtheorem{speculative-definition}[speculative-lemma]{Definition (Speculative)}
\newcommand{\accept}[0]{\textsc{accept}\xspace}
\newcommand{\reject}[0]{\textsc{reject}\xspace}
\newcommand{\abs}[1]{\left|{#1}\right|}
\newcommand{\floor}[1]{{\left\lfloor{#1}\right\rfloor}}
\newcommand{\ceil}[1]{{\left\lceil{#1}\right\rceil}}
\newcommand{\lfrac}[2]{\left. {#1} \middle/ {#2} \right.}
\newcommand{\E}{\mathop{{\rm E}\/}}
\newcommand{\eps}{\varepsilon}
\newcommand{\supp}{\mathrm{supp}}
\newcommand{\poly}{\mathrm{poly}}
\newcommand{\Ber}{\mathrm{Ber}}
\newcommand{\kl}{\ensuremath{\mathrm{KL}}}
\newcommand{\dkl}{\ensuremath{D_\kl}}
\newcommand{\dtv}{\ensuremath{d_\mathrm{TV}}}
\newcommand{\propname}[1]{\textsc{\capitalisewords{#1}}}
\title{Improved Bounds for High-Dimensional Equivalence and Product Testing using Subcube Queries}
\author{Tomer Adar\thanks{Technion - Israel Institute of Technology, Israel. Email: \href{mailto:tomer-adar@campus.technion.ac.il}{tomer-adar@campus.technion.ac.il}.} \and Eldar Fischer\thanks{Technion - Israel Institute of Technology, Israel. Email: \href{mailto:eldar@cs.technion.ac.il}{eldar@cs.technion.ac.il}. Research supported by an Israel Science Foundation grant number 879/22.} \and Amit Levi\thanks{University of Haifa and Technion - Israel Institute of Technology, Israel. Email: \href{mailto:alevi@ds.haifa.ac.il}{alevi@ds.haifa.ac.il}.}}
\begin{document}

\begin{titlepage}
    \maketitle
    \thispagestyle{empty}
    \begin{abstract}
        We study property testing in the subcube conditional model introduced by Bhattacharyya and Chakraborty (2017). We obtain the first equivalence test for $n$-dimensional distributions that is quasi-linear in $n$, improving the previously known $\tilde{O}(n^2/\eps^2)$ query complexity bound to $\tilde{O}(n/\eps^2)$. We extend this result to general finite alphabets with logarithmic cost in the alphabet size.
        
        By exploiting the specific structure of the queries that we use (which are more restrictive than general subcube queries), we obtain a cubic improvement over the best known test for distributions over $\{1,\ldots,N\}$ under the interval querying model of Canonne, Ron and Servedio (2015), attaining a query complexity of $\tilde{O}((\log N)/\eps^2)$, which for fixed $\eps$ almost matches the known lower bound of $\Omega((\log N)/\log\log N)$. We also derive a product test for $n$-dimensional distributions with $\tilde{O}(n / \eps^2)$ queries, and provide an $\Omega(\sqrt{n} / \eps^2)$ lower bound for this property.
    \end{abstract}
\end{titlepage}


\pagenumbering{arabic}

\section{Introduction}
\label{sec:intro}
Property testing seeks to efficiently distinguish between objects that have some property and objects that are $\eps$-far from any object that has it, with respect to a predefined metric and a proximity parameter $\eps$. Property testing of functions, and in particular string testing, was initiated in \cite{rs1996,blr1993}. The term ``efficiently'' usually refers to a sublinear amount of resources for moderately-sized tasks, but for high-dimensional inputs practicality mandates to restrict this amount further to the poly-logarithmic scale.

The study of distribution testing was implicitly initiated in \cite{gr11} (originally as \cite{gr11asGR00}, motivated by a problem in graph testing), and formally defined as the \emph{sampling model} in \cite{batu-FRSW2000, batuFFKRW2001}. In this model, the algorithm gets access to a sequence of independent samples identically distributed from the input distribution, and then decides whether to accept or reject based on them. A major topic of investigation concerns testing that a distribution over $\{1,\ldots,n\}$ is the uniform one. A long string of results, starting with the original \cite{gr11} and culminating in \cite{paninski2008coincidence} (for most values of $\eps$; for extreme cases, it has been completed by \cite{ADK15} and \cite{DK16}), has reached the tight bound of $\Theta(\sqrt{n}/\eps^2)$.

A square-root lower bound on one of the most basic of properties is impractical in many real-world settings. For example, the square-root sample complexity of uniformity testing is impractical when the input ranges over $100$-bit binary strings, an example which is still smaller than real-world inputs. 
There are three approaches to overcome this problem: the first, on which we expand below, is moving to stronger query models. The second is narrowing the scope of the allowable inputs, such as \emph{product distributions} and \emph{Bayesian networks} \cite{baynets17}. The third is moving to a model where the metric is more lax (possibly coupled with an even weaker query model), such as Hellinger distance \cite{DKW18} and the Huge Object Model defined in \cite{gr22} that is concerned with the earth-mover distance metric.

We focus here on the first (and most common) approach to the scaling problem, that of considering a model with stronger queries. In the distribution testing setting, this usually focuses on \emph{conditional sampling}. Instead of drawing a sequence of independent unconditional samples, we allow the algorithm to choose a subset of the possible outcomes and draw a sample conditioned on belonging to it. The models within this paradigm differ in the subset conditions they allow.

The first investigation of such a model was the fully conditional model \cite{chakraborty2013power,crs15}. In this model, the algorithm can choose \emph{any} subset to condition on. This model is very powerful but not realistic. If we are able to restrict the input distribution to any subset, we probably already have access to an explicit description of the distribution and thus have no need to sample it.

Further studies consider restricted forms of the conditional model. For example, \cite{crs15} considers two additional models: the pair model and the interval model. In the pair model, the algorithm can still draw unconditional samples, and additionally it can draw conditional samples from any subset of two elements. Uniformity testing in the pair model requires $\tilde{\Theta}(1/\eps^2)$ samples. In the interval model, we test distributions over $\Omega = \{1,\ldots,N\}$ (for some $N$), and the algorithm can condition on interval sets, which are sets of the form $\{ a \le x \le b \}$. The lower bound for uniformity testing in the interval model is $\tilde{\Omega}(\log N)$. The best known upper bound is $\tilde{O}((\log N)^3 / \eps^2)$.

The subcube conditional model \cite{bc17} is motivated by database analysis. In this model, we test distributions over the set $\prod_{i=1}^n \Omega_i$, and the algorithm can query \emph{subcube subsets}, which are sets of the form $\prod_{i=1}^n A_i$ where $A_i \subseteq \Omega_i$ for every $1 \le i \le n$. While not being extremely restrictive, its queries correspond to selection by attribute values, which is common in practice. Some of the prior work refers to a weaker variant of this model, where each $A_i$ is either trivial ($A_i = \Omega_i$) or a singleton ($A_i=\{a_i\}$ for some $a_i\in\Omega_i$). In this paper we mainly deal with the strong model, where general $A_i\subseteq\Omega_i$ are allowed. In the most investigated setting, the binary setting where $\Omega_i=\{0,1\}$ for all $i$, the two models are the same.

Uniformity testing \cite{canonne2021random,chen2024uniformity} can be done in the weak model using $\tilde{O}(m^{21} \sqrt{n} / \eps^2)$ queries where $m = \max_i |\Omega_i|$, and requires at least $\Omega(\sqrt{mn} / \eps^2)$  queries \cite{bhattacharyya2021testing}.  Other properties studied under the subcube model include identity to some fixed distribution \cite{bc17,horrible} and having a probability density function supported on a low-dimensional subspace \cite{chen2021learning}.

Two related properties are of particular interest in distribution testing, the identity property and the equivalence property. In the identity property, a \emph{reference distribution} is given to the algorithm in advance, and the task is to check whether the input distribution is identical to it. The equivalence property (henceforth: \propname{equivalence}) has an input consisting of \emph{two} distributions, both of which accessible through the testing model, and the task is to check whether they are equal to each other.

In the subcube conditional model, since the input distribution is defined over a set of tuples, another natural property is that of being a product distribution (henceforth: \propname{product}, also known as independence testing in \cite{batuFFKRW2001, ADK15, DK16}). A distribution over tuples is called a product if its entries are independently distributed.

Our main upper bound result is a test for \propname{equivalence} that for the binary setting ($\Omega_i=\{0,1\}$) uses only $\tilde{O}(n/\epsilon^2)$ subcube queries, which improves on the previously known result of $\tilde{O}(n^2/\eps^2)$ from \cite{bc17}. Additionally our result uses only \emph{prefix queries} from one distribution and \emph{marginal prefix queries} from the other, which we define below and are rather restricted forms of subcube queries. One can think of prefix queries as the queries that can be made fast when the database is sorted according to a concatenation of its attributes in a pre-defined order (functioning as its primary key). Importantly, the use of restricted queries allows us to derive an improved test also for \propname{product}, and to generalize the test to general $\Omega_1,\ldots,\Omega_n$ with a logarithmic cost in the alphabet size.

The restricted form of our queries also allows us to tighten the previously known upper bound on equivalence testing (and through it the special case of uniformity testing) in the interval conditional model, obtaining an upper bound of $\tilde{O}(\log N / \eps^2)$ interval queries, which matches the lower bound for every fixed $\eps$ up to poly-double-logarithmic factors in $N$.

We complement our upper bound for \propname{product} with an $\Omega(\sqrt{n}/\eps^2)$ lower bound. The question of whether we can go below $O(n)$ for testing our properties (even for the binary setting and a fixed $\eps$) remains open.

\section{Organization of the paper}
\label{sec:orgainzation}
In Section \ref{sec:results} we summarize our contributions. After some preliminaries in Section \ref{sec:prelims}, we provide the core of our main proofs, followed by the more technical details. In Section \ref{sec:closeness-main-path} we provide the \propname{equivalence} testing upper bound in the binary setting, and a short proof of the corollary about interval queries. In Section \ref{sec:lbnd-product-main-path} we provide the \propname{product} testing lower bound.


Then we prove the theorems derived from this upper bound: in Section \ref{sec:closeness-omegas} we extend the \propname{equivalence} test to the non-binary setting, and in Section \ref{sec:ubnd-product} we derive a test for \propname{product}.

All upper bound proofs implicitly construct their algorithms. For reference, explicit representations of the binary setting algorithms for \propname{equivalence} and \propname{product} are given in the appendix.

\section{Our results}
\label{sec:results}
We improve on the previously known result of $\tilde{O}(n^2/\eps^2)$ queries for equivalence testing of two distributions over $\{0,1\}^n$ \cite{bc17}. Our methods can also supersede the $\tilde{O}(n/\eps)$ algorithm of \cite{horrible} which tests identity with a distribution given in advance that belongs to a very restricted class of inputs (their parameter refers to KL-divergence, which indeed incurs a quadratic gap when converted to total-variation distance). We provide more details on the latter below (before Lemma \ref{lemma:closeness-linearithmic-fd-soundness-if-nice}).

\begin{restatable*}{theorem}{thZclosenessZlinearithmic}\label{th:closeness-linearithmic}
    Let $\tau$, $\mu$ be two distributions over $\{0,1\}^n$, where $\tau$ is accessible through the prefix oracle access and $\mu$ is accessible through the marginal prefix oracle access. For every $\eps > 0$ we can distinguish between $\tau = \mu$ and $\dtv(\tau,\mu) > \eps$ using $\tilde{O}(n / \eps^2)$ queries.
\end{restatable*}

The prefix queries in the statement above are a special case of subcube queries, but they can also be seen as interval queries, allowing us to prove the following corollary:

\begin{restatable*}{corollary}{corZintervalZlinearZtilde} \label{cor:interval-linear-tilde}
    Let $\tau$, $\mu$ be two distributions over $\{1,\ldots,N\}$, both accessible through the interval oracle. Then we can distinguish between $\tau = \mu$ and $\dtv(\tau, \mu) > \eps$ using $\tilde{O}((\log N) / \eps^2)$ queries.
\end{restatable*}

We show a lower bound for testing a distribution $\mu$ over $\{0,1\}^n$ for being a product. To the best of our knowledge, it is the first lower bound for product testing  in the binary setting. Our construction is similar to the lower bound for uniformity testing of \cite{baynets17}.

\begin{restatable*}{theorem}{thZlbndZindependence} \label{th:lbnd-independence}
    Every $\eps$-test for \propname{product} must make at least $\Omega\left(\sqrt{n} / \eps^2\right)$ subcube queries.
\end{restatable*}

We generalize our upper bound for equivalence testing to strings of size $n$ over larger alphabets. Our result is incomparable with the previously known result of $\tilde{O}(\frac{n^5}{\eps^5} \log \log |\Omega|)$ \cite{bc17}. Note that the cited result refers to strings over a single alphabet (that is, $\Omega^n$), whereas our result refers to strings over mixed alphabets (that is, $\prod_{i=1}^n \Omega_i$). Additionally, our result is more efficient when $|\Omega|$ is not very large with respect to $n$.

\begin{restatable*}{theorem}{thZclosenessZomegas} \label{th:closeness-omegas}
    Let $\mu$ and $\tau$ be two distributions over $\prod_{i=1}^n \Omega_i$, where $\Omega_1,\ldots,\Omega_n$ are all finite. If $\mu$ is accessible through the marginal prefix oracle and $\tau$ is accessible through the prefix oracle, then we can distinguish between $\tau = \mu$ and $\dtv(\tau, \mu) > \eps$ using $\tilde{O}(\sum_{i=1}^n \log_2 |\Omega_i| / \eps^2)$ queries.
\end{restatable*}

We apply the same generalization for product testing as well. In the binary setting, it also improves on the previously known result of $\tilde{O}(n^2 / \eps^2)$ \cite{bc17}.

\begin{restatable*}{theorem}{thZubndZproduct} \label{th:ubnd-product}
    Let $\mu$ be a distribution over $\prod_{i=1}^n \Omega_i$. For every $0 < \eps < 1$, we can distinguish between the case where $\mu$ is a product distribution and the case where it is $\eps$-far from every product distribution at the cost of $\tilde{O}(\sum_{i=1}^n \log |\Omega_i| / \eps^2)$ subcube queries. Moreover, if $|\Omega_i| = 2$ for every $1 \le i \le n$, then all these queries are prefix queries.
\end{restatable*}

\section{Preliminaries}
\label{sec:prelims}

For brevity, for $m\in\mathbb{N}$ we let $[m]$ denote the set $\{1,\ldots,m\}$.

\begin{definition}[Bernoulli distribution]
    Let $0 \le p \le 1$. The \emph{Bernoulli distribution with parameter $p$}, denoted by $\Ber(p)$, is the distribution over $\{0,1\}$ whose probability to draw $1$ is $p$.
\end{definition}

\begin{definition}[Common statistical divergence measures]
    Let $\mu$ and $\tau$ be two distributions over a finite set $\Omega$. We use two well known divergence measures, namely the total-variation distance $\dtv(\mu, \tau) = \frac{1}{2} \sum_{x \in \Omega} \abs{\mu(x) - \tau(x)} = \max_{E \subseteq \Omega} \abs{\mu(E) - \tau(E)}$ and the KL-divergence $\dkl(\mu, \tau) = \E_{x \sim \mu}\left[\log_2 \frac{\mu(x)}{\tau(x)} \right]$. Note that this notation is only well-defined where $\Pr_\mu[B] > 0$.
\end{definition}

\begin{definition}[Conditional distribution]
    Let $\mu$ be a distribution over $\Omega$, and let $B \subseteq \Omega$ be an event. The corresponding \emph{conditional distribution} is denoted by $\mu|^B$ and is defined by $\mu|^B(x)=0$ for $x\notin B$ and $\mu|^B(x)=\mu(x)/\Pr_{\mu}[B]$ for $x\in B$.
\end{definition}

\begin{definition}[Index-restricted distribution]
    Let $\mu$ be a distribution over $\prod_{i=1}^n\Omega_i$, and let $I \subseteq [n]$ be a set of indices. We use $\mu|_I$ to denote the distribution that draws $x \sim \mu$, and returns the restricted string $x|_I \in \prod_{i\in I}\Omega_i$.
\end{definition}

Note that the correct way to parse the overloaded notation $\mu|^B_I$ is ``$(\mu|^B)|_I$'', that is, we first apply the condition and then restrict the indices.

Here we define the query model for distributions over binary strings. The definitions for general alphabets appear in Section \ref{sec:closeness-omegas}.

\begin{definition}[Subcube oracle access]
    Let $\mu$ be an unknown distribution over $\{0,1\}^n$. The \emph{subcube oracle} has a set $I \subseteq [n]$ and a string $w \in \{0,1\}^I$ as input, and its output distributes as $\mu|^{x_I = w}$. If $\Pr_\mu[x_I = w] = 0$, then the oracle indicates an error. Note that the answers of the oracle are fully independent of the answers that were given to previous queries.
\end{definition}

We note that the ``error behavior'' in the above definition does not really affect our results, since all our upper bounds use only restrictions to guaranteed positive probability outcomes, while our lower bounds use distributions that have no zero-probability elements.

\begin{definition}[Prefix oracle access]
    Let $\mu$ be an unknown distribution over $\{0,1\}^n$. The \emph{prefix oracle} is a restricted case of the subcube oracle, where $I = [k]$ for some $0 \le k \le n - 1$.
\end{definition}

\begin{definition}[Marginal subcube oracle access]
    Let $\mu$ be an unknown distribution over $\{0,1\}^n$. The \emph{marginal subcube oracle} has a set $I \subseteq [n]$, an index $i \in [n] \setminus I$ and a string $w \in \{0,1\}^I$ as input, and its output is a single bit that distributes as $\mu|_i^{x_I = w}$. If $\Pr_\mu[x_I = w] = 0$, then the oracle indicates an error.
\end{definition}

\begin{definition}[Marginal prefix oracle access]
    Let $\mu$ be an unknown distribution over $\{0,1\}^n$. The \emph{marginal prefix oracle} is a restricted case of the marginal subcube oracle, where $I \!=\! [i-1]$.
\end{definition}

We now define the interval querying model for which we derive a new bound.

\begin{definition}[Interval oracle access]
    Let $\mu$ be an unknown distribution over $[N]$. The \emph{interval oracle} has two elements $1\leq a\leq b\leq N$ as input, and its output distributes as $\mu|^{\{a,\ldots,b\}}$.
\end{definition}

We next define our properties for distributions over general alphabet strings, and what it means to test for a property.

\begin{definition}[\propname{equivalence}]
    A pair of distributions $\mu$, $\tau$ over $\prod_{i=1}^n\Omega_i$ belongs to the equivalence property if $\mu = \tau$. The distance of a given pair $(\mu,\tau)$ from \propname{equivalence} is explicitly defined as $\dtv(\mu,\tau)$ instead of the natural $\inf_\nu (\dtv(\mu,\nu)+ \dtv(\tau,\nu))$. The two quantities are easily seen to be identical using the triangle inequality.
\end{definition}

\begin{definition}[\propname{product}]
    A distribution $\mu$ over $\prod_{i=1}^n\Omega_i$ is called a \emph{product distribution} if there exist distributions $\mu_1,\ldots,\mu_n$ over $\Omega_1,\ldots,\Omega_n$ respectively, for which $\mu \sim \mu_1 \times \cdots \times \mu_n$. We denote by \propname{Product} the set of all product distributions over $\prod_{i=1}^n\Omega_i$.
\end{definition}

\begin{definition}[$\eps$-test]
    For some property $\mathcal{P}$ of distributions (which is a set of distributions) and $\eps > 0$, we say that an algorithm $\mathcal{A}$ is an $\eps$-test for $\mathcal{P}$ if:
    \begin{itemize}
        \item For every input distribution $\mu \in \mathcal{P}$, $\mathcal{A}$ accepts with probability at least $\frac{2}{3}$.
        \item For every input distribution $\mu$ for which $\dtv(\mu,\nu) > \eps$ for all $\nu \in \mathcal{P}$, $\mathcal{A}$ rejects with probability at least $\frac{2}{3}$.
    \end{itemize}
\end{definition}

\section{Linear sample-complexity algorithm for \propname{equivalence}}
\label{sec:closeness-main-path}
To construct our improvement on the algorithm of \cite{bc17}, we first define some templates for analyzing and comparing differences between distributions.

\begin{definition}[Single-bit divergence]
    We call $d : [0,1] \times [0,1] \to [0,\infty)$ a \emph{single-bit divergence} if:
    \begin{itemize}
        \item For every $p,q \in [0,1]$, $d(p,q) = 0$ if and only if $p=q$ (\emph{Positivity}).
        \item For every $p' \le p \le q \le q'$, $d(p,q) \le d(p', q')$ and $d(q,p) \le d(q', p')$ (\emph{Monotonicity}).
    \end{itemize}
\end{definition}
Three useful single-bit divergences are $\mathrm{TV}(p,q) = |p - q|$, $\kl(p,q) = p \log_2 \frac{p}{q} + (1-p)\log_2 \frac{1-p}{1-q}$ and $\chi^2(p,q) = \frac{(p-q)^2}{(p+q)(2 -(p + q))}$. Note that all three are derived from their probability-theoretic counterparts for $\Ber(p)$ and $\Ber(q)$. We have two motivations to prefer this symmetric form of $\chi^2$: first, the symmetry matches the idea of two unknown distributions, which is not the case in standard $\chi^2$-tests, and second, it is bounded by $1$, which makes the analysis more similar to total-variation distance than to KL-divergence. In this paper, we use the name ``chi-square'' (and the notation $\chi^2$) to indicate this symmetric form.

\begin{definition}[Slice-wise divergence]
    Let $d$ be a single-bit divergence. For every $n$ and two distributions $\tau$, $\mu$ over $\{0,1\}^n$, the \emph{slice-wise divergence} of $\tau$ and $\mu$ with respect to $d$ is:
    \[ \Delta_d(\tau, \mu) = \sum_{i=1}^n \E_{w \sim \tau} \left[d(\tau|_i^{x_{[i-1]} = w_{[i-1]}}(1), \mu|_i^{x_{[i-1]} = w_{[i-1]}}(1))\right] \]
    (note that the $d$-divergence is fed the probabilities of the two single-bit distributions to draw $1$).
\end{definition}

Recall the $\tilde{O}(n^5 / \poly(\eps))$ algorithm of \cite{bc17}, which is actually $\tilde{O}(n^2 / \eps^2)$ for the binary setting. As a motivation, the paper uses (and has a self-contained proof of) the following bound:
\[ \dtv(\tau, \mu) \le \sum_{i=1}^n \E_{w \sim \tau} \left[\dtv(\tau|_i^{x_{[i-1]} = w_{[i-1]}}, \mu|_i^{x_{[i-1]} = w_{[i-1]}})\right] = \Delta_\mathrm{TV}(\tau, \mu) \]
The algorithm then distinguishes between $\Delta_\mathrm{TV}(\tau, \mu) = 0$ (which always holds if $\tau = \mu$) and $\Delta_\mathrm{TV}(\tau, \mu) > \eps$ (which always holds if $\dtv(\tau, \mu) > \eps$).

To improve their test, inside the slice-wise divergence expression we substitute the single-bit total-variation distance with the single-bit $\chi^2$-distance, and analyze the more convenient $\Delta_{\chi^2}(\tau,\mu)$.

The following is immediate, and in fact holds for every slice-wise divergence:

\begin{observation}\label{obs:closeness-linearithmic-fd-completeness}
    Let $\tau$, $\mu$ be two distributions over $\{0,1\}^n$. If $\tau = \mu$ then $\Delta_{\chi^2}(\tau, \mu) = 0$.
\end{observation}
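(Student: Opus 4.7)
The plan is to unpack the definition of $\Delta_{\chi^2}$ and invoke only the positivity axiom of single-bit divergences. First I would fix the hypothesis $\tau = \mu$ and observe that for every index $i\in[n]$ and every prefix $w_{[i-1]}\in\{0,1\}^{i-1}$ of positive $\tau$-probability, the single-bit conditional marginals $\tau|_i^{x_{[i-1]}=w_{[i-1]}}$ and $\mu|_i^{x_{[i-1]}=w_{[i-1]}}$ coincide, because they are obtained from identical joint distributions by the same conditioning and restriction operations. In particular their probabilities of drawing $1$ are equal, so by the ``only if'' direction of Positivity we have $\chi^2\!\left(\tau|_i^{x_{[i-1]}=w_{[i-1]}}(1),\,\mu|_i^{x_{[i-1]}=w_{[i-1]}}(1)\right)=0$ for every such $w$.

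Since every summand inside the expectation over $w\sim\tau$ is zero on the (full-measure) support of $\tau$, each of the $n$ expectations is zero, and hence $\Delta_{\chi^2}(\tau,\mu) = 0$. There is no real obstacle: the argument uses only Positivity and the defining formula of $\Delta_d$, and it goes through verbatim for any single-bit divergence, which is exactly why the text preceding the observation notes that the same conclusion holds for every slice-wise divergence. The point of recording this as a separate observation is to isolate the easy completeness direction of the upcoming test, so that the subsequent work can concentrate on the quantitative soundness analysis of $\Delta_{\chi^2}$.
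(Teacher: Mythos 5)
Your proof is correct and is precisely the argument the paper implicitly invokes: the paper gives no explicit proof, simply remarking that the observation is immediate and holds for any slice-wise divergence, which is exactly the positivity-only argument you spell out. Nothing further is needed.
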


The following lemma, which we prove in Subsection \ref{sec:closeness-main-path:subsec:soundness}, provides the conversion from the total variation distance to the slice-wise chi-square divergence.

\begin{lemma}\label{lemma:closeness-linearithmic-fd-soundness}
    Let $\tau$, $\mu$ be two distributions over $\{0,1\}^n$. If $\tau \ne \mu$, then
    $\Delta_{\chi^2}(\tau, \mu) \ge \frac{(\dtv(\tau,\mu))^2}{24 \log ({2n}/\dtv(\tau,\mu))}$.
\end{lemma}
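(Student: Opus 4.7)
The plan is to combine the telescoping bound $\dtv(\tau,\mu) \le \Delta_{\mathrm{TV}}(\tau,\mu) = \sum_i \E_{w\sim\tau}[|p_i - q_i|]$ (where I abbreviate $p_i = \tau_i^{w_{[i-1]}}(1)$ and $q_i = \mu_i^{w_{[i-1]}}(1)$), as recorded in the excerpt, with the single-bit identity $|p-q|^2 = \chi^2(p,q) \cdot s$ where $s = (p+q)(2-(p+q)) \in [0,1]$. Writing each term as $\E_\tau[\sqrt{\chi^2_i \, s_i}]$ and applying Cauchy--Schwarz yields $\Delta_{\mathrm{TV}}(\tau,\mu)^2 \le \Delta_{\chi^2}(\tau,\mu) \cdot \sum_i \E_\tau[s_i]$; however, the trivial bound $\sum_i \E_\tau[s_i] \le n$ only produces $\Delta_{\chi^2} \ge \dtv^2/n$, which is off by the factor $n/\log(2n/\dtv)$ from the target. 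The whole game is to pay only a logarithmic factor instead of the linear one.

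To gain the tighter $\log(2n/\dtv)$ factor, I would first sharpen the TV upper bound to
\[
\dtv(\tau,\mu) \le \sum_{i=1}^n \sum_{w_{[i-1]}} \min\!\bigl(\tau(w_{[i-1]}), \mu(w_{[i-1]})\bigr) \cdot |p_i - q_i|
\]
by analysing the level-$i$ TV increment $\Phi_i - \Phi_{i-1}$, where $\Phi_i := \dtv(\tau|_{[i]}, \mu|_{[i]})$: a per-subtree case analysis on the signs of $\tau(w_{[i-1]}\cdot b) - \mu(w_{[i-1]}\cdot b)$ (for $b\in\{0,1\}$) reveals that the new imbalance contributed by each subtree is at most $\min(\tau(w_{[i-1]}),\mu(w_{[i-1]}))\,|p_i-q_i|$. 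Next, partition the pairs $(i,w_{[i-1]})$ into $O(\log(2n/\dtv))$ dyadic buckets according to the magnitude of $\min(\tau,\mu)$, discarding pairs whose weight is below $\dtv/(4n)$ (these contribute at most $n\cdot\dtv/(4n)=\dtv/4$ in total and can be absorbed into the constants). Within each surviving bucket, per-level Cauchy--Schwarz against $(p-q)^2 \le \chi^2$ controls the contribution by the square root of a bucket-local $\Delta_{\chi^2}$ term, and aggregation across the $O(\log(2n/\dtv))$ buckets is designed to give $\dtv \le O\bigl(\sqrt{\log(2n/\dtv)\cdot\Delta_{\chi^2}}\,\bigr)$, which rearranges to the stated inequality (with the constant $24$ absorbed along the way).

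The hard part will be the cross-bucket recombination. A naive Cauchy--Schwarz across buckets still leaves a residual $\sqrt{n}$-type loss unless one exploits extra structure: specifically, the monotonicity of $\Phi_i$ together with the refined per-level bound $\Phi_i - \Phi_{i-1} \le \sqrt{(1-\Phi_{i-1})\,\E_\tau[\chi^2_i]}$ (obtained from the above min-weighted refinement by a single Cauchy--Schwarz), which should convert the multiplicative $(1-\Phi_{i-1})$ factor into a logarithmic dependence on $\dtv$ through an integration-by-parts or ODE-style estimate on $\Phi$. If the direct dyadic approach does not close cleanly, a backup route is via Hellinger divergence: combine $\dtv(\tau,\mu)^2 \le 2 H^2(\tau,\mu)$ and the single-bit comparison $\chi^2(p,q) \ge H^2(\Ber(p),\Ber(q))$ with an approximate chain rule $H^2(\tau,\mu) \le O(\log(2n/\dtv))\,\sum_i \E_\tau[H^2(\tau_i^{w_{[i-1]}},\mu_i^{w_{[i-1]}})]$, proved by dyadic truncation of pathological prefixes in the mixture chain-rule measure whose conditionals are $(\tau_i^{w_{[i-1]}}+\mu_i^{w_{[i-1]}})/2$ and for which the exact chain rule for $H^2$ is clean.
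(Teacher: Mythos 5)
Your proposal takes a genuinely different route from the paper's, and the gap sits exactly where your own caveats place it: the aggregation across levels. The paper never aggregates at the TV level. It first applies Pinsker's inequality to obtain $2\dtv(\tau,\mu)^2\le\dkl(\tau,\mu)$, then uses the \emph{exact} chain rule $\dkl(\tau,\mu)=\sum_{i}\E_{w\sim\tau}\bigl[\kl\bigl(\tau|_i^{x_{[i-1]}=w_{[i-1]}}(1),\mu|_i^{x_{[i-1]}=w_{[i-1]}}(1)\bigr)\bigr]$, and only then compares $\kl$ to $\chi^2$ bit by bit, paying a $\log(1/q)$ factor (Lemma \ref{lemma:chi-square-pseudolinear-with-dkl}). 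The KL chain rule is an \emph{equality} with precisely the $\tau$-weighting that appears in $\Delta_{\chi^2}$, so nothing is lost in the cross-level recombination; the unbounded $\log(1/q)$ in the single-bit step is then tamed by clipping $\mu$'s conditional marginals to the window $\bigl[\tfrac{\dtv(\tau,\mu)}{2n},1-\tfrac{\dtv(\tau,\mu)}{2n}\bigr]$, which moves $\dtv$ by at most a factor of $2$ and, by monotonicity, only decreases each $\chi^2$ term. That truncation is what manufactures the $\log(2n/\dtv)$.

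Your per-level refinement $\Phi_i-\Phi_{i-1}\le\sqrt{(1-\Phi_{i-1})\,\E_{w\sim\tau}[\chi^2_i]}$ is correct, and the $\min(\tau(w),\mu(w))$-weighted sharpening of $\Delta_{\mathrm{TV}}$ is a nice observation, but none of the proposed ways to sum it up avoid an $n$-type loss. The factor $1-\Phi_{i-1}$ decreases only from $1$ to $1-\dtv(\tau,\mu)$, so when $\dtv$ is small it supplies no decay at all; the ODE $\Phi'\le\sqrt{(1-\Phi)c(t)}$ integrates to $\dtv\le 2(1-\sqrt{1-\dtv})\le\int\sqrt{c}$, and Cauchy--Schwarz on $\int\sqrt c$ again gives $\sqrt{n\Delta_{\chi^2}}$. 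The same is true of the dyadic bucketing: take $\tau$ uniform on $\{0,1\}^n$ and $\mu=\prod_i\Ber(\tfrac12+\delta)$ for small $\delta$. There $s_i=(p+q)(2-(p+q))\approx1$ and $\min(\tau(w),\mu(w))\approx\tau(w)$ for every prefix, so every $(i,w)$ lands in the top bucket and no weight can be discarded; the lemma still holds (because $\dtv\approx\delta\sqrt n$, not $\delta n$), but per-level Cauchy--Schwarz simply cannot see that cancellation --- Pinsker plus the exact KL chain rule can. Your Hellinger backup hits the same wall in a different guise: the exact Hellinger chain rule gives $H^2(\mu,\tau)=\sum_i\sum_{w}\sqrt{\mu(w)\tau(w)}\,H^2\bigl(\mu|_i^{x_{[i-1]}=w},\tau|_i^{x_{[i-1]}=w}\bigr)$, with geometric-mean weights $\sqrt{\mu(w)\tau(w)}$ rather than $\tau(w)$; converting to a $\tau$-weighted sum costs a factor $\sqrt{\mu(w)/\tau(w)}$ that is unbounded in exactly the problematic direction, so you would need a truncation device of the paper's kind anyway. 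In short, the missing ingredient is Pinsker plus the exact KL chain rule, which sidesteps the cross-level Cauchy--Schwarz entirely.
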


The following lemma, which we prove in Subsection \ref{sec:closeness-main-path:subsec:chi-square-test}, states that we can distinguish between single bit distributions using linearly many samples with respect to the inverse of their $\chi^2$-divergence.

\begin{restatable}{lemma}{lemmaZsingleZbitZchiZsquareZtest}
    \label{lemma:single-bit-chi-square-test}
    Let $p,q \in [0,1]$ be two probabilities. Given unconditional sampling access to $\Ber(p)$ and $\Ber(q)$, we can distinguish, with probability $\frac{2}{3}$, between the case where $p=q$ and the case where $\chi^2(p, q) > \eps$, at the cost of $O(1 / \eps)$ samples from each of them.
\end{restatable}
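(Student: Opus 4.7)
The plan is to draw $m = C/\eps$ samples from each distribution for a sufficiently large absolute constant $C$, and apply a normalized chi-squared-style test. Let $\hat p, \hat q$ be the empirical means and $\hat r = \frac{1}{2}(\hat p + \hat q)$. The test rejects iff
\[
T \;:=\; \frac{m(\hat p - \hat q)^2}{2\hat r(1-\hat r) + 1/m} \;>\; K
\]
for a suitable absolute constant $K$. The denominator estimates $m\cdot \Var(\hat p-\hat q)$ under the null, so $T$ is morally the squared $z$-score of $\hat p - \hat q$; the additive $1/m$ regularizer keeps $T$ bounded in the degenerate cases $\hat r\in\{0,1\}$.

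Two structural inequalities fix the relevant scales. Writing $\bar r = (p+q)/2$, the definition of $\chi^2$ gives $\chi^2(p,q) = (p-q)^2/(4\bar r(1-\bar r))$, so the alternative $\chi^2(p,q)>\eps$ reads $(p-q)^2 > 4\eps\,\bar r(1-\bar r)$. Also $|p-q|\le 2\min(\bar r,1-\bar r)$ holds unconditionally, so combining with the alternative forces $\min(\bar r,1-\bar r) > \eps/(1+\eps)$ and hence $\bar r(1-\bar r)\ge \eps/4$. Under the null ($p=q=r$), $m(\hat p-\hat q)^2$ has mean $2r(1-r)$ matching the denominator up to the regularizer, and a Bernstein-type tail bound on $\hat p - \hat q$ yields $T = O(1)$ with probability at least $2/3$. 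Under the alternative, both $m\bar r$ and $m(1-\bar r)$ exceed $C$, so a Chernoff bound gives $\hat r(1-\hat r) = \Theta(\bar r(1-\bar r))$ with high probability; meanwhile $(\hat p-\hat q)^2$ concentrates at $(p-q)^2$, and substituting gives $T \approx 2m\chi^2(p,q) > 2C$, so for $C\gg K$ the test rejects with probability at least $2/3$.

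The main technical obstacle is the boundary regime $\bar r(1-\bar r) \asymp \eps$, where numerator and denominator of $T$ both sit at scale $\eps/m$ and naive Hoeffding tails on $\hat p - \hat q$ are too weak by a polynomial factor in $\eps$. The cleanest remedy is a case split on $\hat r(1-\hat r)$: if this empirical quantity is below a small constant multiple of $\eps$, then the structural lower bound together with Chernoff rules out the alternative for $C$ large, so accepting is safe in that branch; otherwise $\hat r(1-\hat r)$ is a reliable surrogate for $\bar r(1-\bar r)$, and Bernstein's inequality applied with the true variance $2\bar r(1-\bar r)/m$ gives exactly the right-scale tail bounds on $\hat p - \hat q$ to control $T$.
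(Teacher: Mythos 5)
Your approach is correct in outline but genuinely different from the paper's, so it's worth comparing. You estimate the $\chi^2$ statistic directly: draw $m=C/\eps$ samples from each side, form the studentized statistic $T=m(\hat p-\hat q)^2/(2\hat r(1-\hat r)+1/m)$, and threshold. The two structural facts you isolate, $\chi^2(p,q)=(p-q)^2/(4\bar r(1-\bar r))$ and the alternative forcing $\bar r(1-\bar r)\ge\eps/4$, are indeed the right levers, and the case split on whether $\hat r(1-\hat r)$ is below a small multiple of $\eps$ is the correct way to neutralize the degenerate boundary regime. You correctly flag that naive Hoeffding is insufficient and that Bernstein at the true-variance scale is what closes the gap. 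All of this can be made rigorous with constant-chasing; the only places that need genuine care are (a) showing that under the null with $r(1-r)\lesssim 1/m$ the statistic is bounded by a constant (a Poisson-tail argument on $m\hat p$, $m\hat q$), and (b) confirming that the accept-on-small-$\hat r(1-\hat r)$ branch never fires under the alternative, which follows from $\bar r(1-\bar r)\ge\eps/4$ plus multiplicative Chernoff.

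The paper avoids all variance estimation and all boundary case analysis by using a sign test instead. It runs $64$ independent trials; in each, it sums $N=\lceil 24/\eps\rceil$ samples of each side to get counts $X$ and $Y$, records whether $X>Y$ or $X<Y$, and then thresholds how often each sign appears. The engine is a pair of Chernoff bounds (their Lemmas~\ref{lemma:chi-square-helper-1} and~\ref{lemma:chi-square-helper-2}) showing that the sign of $X-Y$ matches the sign of $p-q$ except with probability $e^{-\Omega(\chi^2(p,q)N)}$, so $\chi^2(p,q)>\eps$ makes one sign dominate with probability $>0.74$ per trial, while $p=q$ makes the two signs symmetric. This is one-dimensional and parameter-free: it never touches $\hat r(1-\hat r)$ at all, so there is nothing to regularize and no degenerate regime. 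Your approach is the more standard "plug-in $\chi^2$" test and produces a single statistic, at the cost of a careful empirical-variance analysis; the paper's approach trades that analysis for a $64$-way repetition and a cruder statistic (counts of sign outcomes), but its correctness proof is shorter and entirely mechanical once Lemma~\ref{lemma:chi-square-helper-3} is in hand. Both land at $O(1/\eps)$ samples.
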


We use a common variant of the Levin's work balance method:

\begin{lemma}[\cite{levin1985one}, optimization exercise 8.4 in \cite{goldreich2017introduction}] \label{lemma:levin}
    Let $X$ be a non-negative random variable that is bounded by $1$. Assume that there exists some random variable $Y$ such that for every $y \in \supp(Y)$ and every $\eps' > 0$, we can distinguish between $\E[X|Y=y] = 0$ and $\E[X|Y=y] > \eps'$ using some black-box algorithm whose resource cost is $O(1/\eps')$. Also, assume that we can draw independent unconditional samples from $Y$ at resource cost $O(1)$ per sample. Then we can distinguish between $\E[X] = 0$ and $\E[X] > \eps$ at a total resource cost of $O(\eps^{-1} \log (1/\eps))$.
\end{lemma}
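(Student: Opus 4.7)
The plan is to apply a standard ``work investment'' bucketing argument. Set $K = \lceil \log_2(4/\eps) \rceil$ and partition the support of $Y$ into the dyadic buckets $B_k = \{y : 2^{-k} \le \E[X \mid Y = y] < 2^{-k+1}\}$ for $k = 1, \ldots, K$, together with a tail bucket of $y$'s with $\E[X \mid Y = y] < 2^{-K}$. Writing $p_k = \Pr[Y \in B_k]$ and using $X \le 1$, the tail bucket contributes at most $2^{-K} \le \eps/4$ to $\E[X]$, while bucket $B_k$ contributes at most $2^{-k+1} p_k$. In the soundness case $\E[X] > \eps$, this yields $\sum_{k=1}^K 2^{-k+1} p_k \ge \eps/2$, so by averaging over the $K$ summands there must exist some ``good level'' $k^{*} \in \{1, \ldots, K\}$ with $p_{k^{*}} \ge 2^{k^{*}-2}\eps / K$.

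The algorithm I would use is: for each $k = 1, \ldots, K$, draw $m_k = \Theta(K / (2^k \eps))$ independent unconditional samples from $Y$ and invoke the black-box sub-algorithm with parameter $\eps'_k = 2^{-k}$ on each; reject overall if any invocation rejects. Since the per-invocation cost at level $k$ is $O(2^k)$ and drawing a sample from $Y$ costs $O(1)$, the total cost is $\sum_{k=1}^K m_k \cdot O(2^k) = \sum_{k=1}^K O(K/\eps) = O(\eps^{-1} \log^2(1/\eps))$, as claimed. For completeness, if $\E[X] = 0$ then $\E[X \mid Y = y] = 0$ almost surely, so every sub-algorithm invocation faces the ``identical'' case and accepts with the guaranteed probability. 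For soundness, at the good level $k^{*}$ any sample landing in $B_{k^{*}}$ has $\E[X \mid Y=y] \geq 2^{-k^{*}} = \eps'_{k^{*}}$, hence the sub-algorithm rejects with probability at least $2/3$ conditioned on that event. Each level-$k^{*}$ sample therefore independently triggers a rejection with probability at least $\tfrac{2}{3} p_{k^{*}}$, so with $m_{k^{*}} = \Theta(1/p_{k^{*}})$ samples the probability of seeing at least one rejection is at least $2/3$.

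The main obstacle I foresee is controlling the cumulative failure probability of the sub-algorithm over the $\Theta(K/\eps)$ invocations without inflating the cost. If the sub-algorithm is one-sided on completeness — the typical situation for testers built from counting or Chernoff arguments, and the case for the single-bit tester of Lemma \ref{lemma:single-bit-chi-square-test} — then every completeness invocation succeeds with probability $1$ and no amplification is needed, so the soundness argument at level $k^{*}$ already delivers the target $2/3$. Otherwise one amplifies each sub-tester call by a constant number of repetitions followed by majority vote, which preserves the $O(1/\eps')$ per-invocation cost up to a constant factor; distributing a $1/6$ error budget across levels so that the contribution from false rejections and missed detections each stay below $1/6$ is then a routine bookkeeping exercise. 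Either way, the asymptotic bound $O(\eps^{-1} \log^2(1/\eps))$ is preserved.
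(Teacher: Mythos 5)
Your plan --- dyadic bucketing of $Y$ by $\E[X\mid Y=y]$, averaging to find a good level, and then investing $\Theta(1/p_{k^*})$ samples at that level --- is the right family of argument, but the accounting for the completeness side contains a genuine gap that costs an extra logarithm. After the averaging step you run $m_k = \Theta(K/(2^k\eps))$ batches at level $k$, and hence a total of $\sum_k m_k = \Theta(K/\eps)$ black-box batches. In the completeness case $\E[X]=0$, every one of them can falsely reject with probability up to $1/3$. To keep the overall false-reject probability below $1/6$ by a union bound, the per-batch failure probability must be pushed down to $O(\eps/K)$, which requires $\Theta(\log(K/\eps)) = \Theta(\log(1/\eps))$ repetitions with a majority vote --- not $O(1)$. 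Since $m_k$ already carries an extra $K$ factor from averaging, the resulting cost is $\sum_k m_k \cdot \Theta(\log(1/\eps)) \cdot O(2^k) = O(\eps^{-1}\log^3(1/\eps))$, a full $\log$ too much. Your escape hatch --- that the sub-tester is one-sided on completeness --- does not hold in this paper's setting: the single-bit $\chi^2$-tester of Lemma~\ref{lemma:single-bit-chi-square-test} is two-sided (it accepts only with probability at least $2/3$ when $p=q$), and nothing in the lemma statement restricts the black box to be one-sided.

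The paper's Algorithm~\ref{alg:levin} saves the extra $\log$ factor by using $m_t = \lceil 2^{3-t}/\eps \rceil$ samples at level $t$ --- no factor of $K$ --- together with $\Theta(\log(1/\eps))$ repetitions per batch. Then $\sum_t m_t = \Theta(1/\eps)$, the union bound for completeness goes through, and the total cost is $\Theta(K) \cdot \Theta(\eps^{-1}\log(1/\eps)) = O(\eps^{-1}\log^2(1/\eps))$. The price is that one cannot pin down a single ``good level'' $k^*$: averaging only guarantees $p_{k^*} \ge 2^{k^*-2}\eps/K$, which is a factor $K$ short of what $m_{k^*}=O(1/(2^{k^*}\eps))$ draws can reliably hit. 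Instead, soundness is argued by a telescoping count across levels. A draw at level $t$ with $\eps' = 2^{-t}$ triggers a (correct) rejection whenever $y$ lands in \emph{any} bucket $B_s$ with $s\le t$. Writing $P_t = \sum_{s\le t} p_s$, the probability that no draw lands in a rejectable bucket is at most $\prod_t (1-P_t)^{m_t} \le \exp\bigl(-\sum_t m_t P_t\bigr)$, and interchanging sums gives $\sum_t m_t P_t = \sum_s p_s \sum_{t\ge s} m_t = \Theta(\eps^{-1})\sum_s 2^{-s}p_s = \Omega(1)$, using $\sum_s 2^{-s+1}p_s \ge \eps/2$. It is precisely this reuse of the level-$t$ draws to catch every bucket $B_s$ with $s\le t$ that lets the algorithm get away with the smaller $m_t$, and hence achieve the claimed $O(\eps^{-1}\log^2(1/\eps))$ bound; the pure averaging argument you wrote does not.
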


Using the above we can efficiently detect a large slicewise $\chi^2$ divergence between two distributions.

\begin{restatable}{lemma}{lemmaZclosenessZlinearitmicZDeltaZtest}\label{lemma:closeness-linearitmic-Delta-test}
    Let $\tau$, $\mu$ be two distributions over $\{0,1\}^n$. Then for every $\rho > 0$, we can distinguish between $\Delta_{\chi^2}(\tau, \mu) = 0$ and $\Delta_{\chi^2}(\tau, \mu) > \rho$ using $O(\frac{n}{\rho} \log \frac{n}{\rho})$ prefix queries.
\end{restatable}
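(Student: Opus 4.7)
The plan is to apply Levin's work-balance method (Lemma \ref{lemma:levin}) to a $[0,1]$-valued random variable whose expectation equals $\Delta_{\chi^2}(\tau,\mu)/n$. Draw $i \in [n]$ uniformly and, independently, $w \sim \tau$; set $Y = (i, w_{[i-1]})$ and
\[ X \;=\; \chi^2\!\left(\tau|_i^{x_{[i-1]} = w_{[i-1]}}(1),\; \mu|_i^{x_{[i-1]} = w_{[i-1]}}(1)\right). \]
By the definition of $\Delta_{\chi^2}$ we immediately obtain $\E[X] = \Delta_{\chi^2}(\tau,\mu)/n$, and the specific bounded form $\chi^2(p,q)=\frac{(p-q)^2}{(p+q)(2-(p+q))}$ chosen in the paper satisfies $\chi^2(p,q)\le 1$ for all $p,q\in[0,1]$, so $X\in[0,1]$. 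Distinguishing $\Delta_{\chi^2}(\tau,\mu)=0$ from $\Delta_{\chi^2}(\tau,\mu)>\rho$ therefore reduces to distinguishing $\E[X]=0$ from $\E[X]>\rho/n$, which is exactly the task that Lemma \ref{lemma:levin} handles.

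Next I would verify the two preconditions of Lemma \ref{lemma:levin}. Sampling $Y$ costs a single prefix query: querying $\tau$'s prefix oracle with the empty prefix ($k=0$) returns a sample $w\sim\tau$, after which truncating to $w_{[i-1]}$ and drawing $i$ uniformly at random are free. For any fixed outcome $y=(i,v)$, the conditional expectation $\E[X\mid Y=y]$ is the deterministic value $\chi^2(p_y,q_y)$, where $p_y=\tau|_i^{x_{[i-1]}=v}(1)$ and $q_y=\mu|_i^{x_{[i-1]}=v}(1)$. A single sample from $\Ber(p_y)$ is produced by one prefix query to $\tau$ on input $v$, keeping only bit $i$ of the returned string; a single sample from $\Ber(q_y)$ is produced by one (marginal) prefix query to $\mu$ with prefix $v$ and index $i$. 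Lemma \ref{lemma:single-bit-chi-square-test} therefore supplies, for every $\eps'>0$, a black-box tester of cost $O(1/\eps')$ distinguishing $p_y=q_y$ from $\chi^2(p_y,q_y)>\eps'$. By the positivity clause of the single-bit divergence definition applied to $\chi^2$, the event $\E[X\mid Y=y]=0$ coincides with $p_y=q_y$, so this is exactly the black box demanded by Levin's lemma.

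Invoking Lemma \ref{lemma:levin} with threshold $\rho/n$ then yields a total cost of $O((n/\rho)\log^2(n/\rho))$ prefix queries, matching the claim. The only aspect that requires actual care—rather than presenting a genuine obstacle—is bundling the oracle accesses so that each of ``sample $Y$'' and ``sample $\Ber(p_y)$ or $\Ber(q_y)$'' truly costs one query, so the $O(1)$ and $O(1/\eps')$ cost budgets of Lemma \ref{lemma:levin} are met exactly; once that is in place, the rest is a mechanical composition of the two cited lemmas.
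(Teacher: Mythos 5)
Your proposal is correct and follows essentially the same approach as the paper: normalize $\Delta_{\chi^2}$ by $n$, set $Y=(i,w)$ with $i\sim[n]$ uniform and $w\sim\tau$, let $X=\chi^2(\tau|_i^{x_{[i-1]}=w_{[i-1]}}(1),\,\mu|_i^{x_{[i-1]}=w_{[i-1]}}(1))$, and invoke Levin's work-balance method (Lemma \ref{lemma:levin}) with the single-bit $\chi^2$-test (Lemma \ref{lemma:single-bit-chi-square-test}) as the black box, using threshold $\rho/n$. The extra remarks you make—that $\chi^2\le 1$ so $X\in[0,1]$, that $X(Y)=0$ iff $p_y=q_y$ by positivity, and the per-sample query accounting for $\tau$ (prefix) versus $\mu$ (marginal prefix)—are all correct and merely spell out details the paper leaves implicit.
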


\begin{proof}
    We normalize the divergence:
    \begin{eqnarray*}
        \frac{1}{n}\Delta_{\chi^2}(\tau,\mu)
        &=& \frac{1}{n}\sum_{i=1}^n \E_{w \sim \tau} \left[\chi^2(\tau|_i^{x_{[i-1]} = w_{[i-1]}}(1), \mu|_i^{x_{[i-1]} = w_{[i-1]}}(1)) \right] \\
        &=& \E_{\substack{w \sim \tau \\ i \sim [n]}} \left[\chi^2(\tau|_i^{x_{[i-1]} = w_{[i-1]}}(1), \mu|_i^{x_{[i-1]} = w_{[i-1]}}(1)) \right]
    \end{eqnarray*}

    Then we apply Levin's method, Lemma \ref{lemma:levin}, with the following input. $Y$ is a random variable that receives a value $(i,w)$ where $i$ is uniformly drawn from $[n]$ and $w$ is drawn from $\tau$ (independently of $i$). The parameter $\eps$ is set to $\rho/n$, and the random variable $X$ is defined as a function of $Y=(i,w)$ (meaning that it is constant when conditioned on a specific value of $Y$) by
    \[X(i,w)=\chi^2\left(\mu|_i^{x_{[i-1]} = w_{[i-1]}}(1), \tau|_i^{x_{[i-1]} = w_{[i-1]}}(1)\right)\]
    The distinction between the cases $X(Y)=0$ and $X(Y)>\eps'$ is performed using $O(1/\eps')$ many queries (our resource cost) through the single-bit $\chi^2$-test of Lemma \ref{lemma:single-bit-chi-square-test}.
\end{proof}

Our main theorem follows immediately from the above statements:

\thZclosenessZlinearithmic

\begin{proof}
  Apply Lemma \ref{lemma:closeness-linearitmic-Delta-test} using $\rho = \frac{\eps^2}{24 \log (2n / \eps)}$. Completeness follows from Observation \ref{obs:closeness-linearithmic-fd-completeness} and soundness follows from Lemma \ref{lemma:closeness-linearithmic-fd-soundness}.
\end{proof}

\corZintervalZlinearZtilde

\begin{proof}
    Without loss of generality, assume that $N=2^\ell$ for some integer $\ell$ (otherwise we just pad the two distributions with zero-probability elements). For every $i \ge 0$, let $\mathrm{bin}_i : \{0,\ldots,2^i - 1\} \to \{0,1\}^i$ and $\mathrm{unbin}_i : \{0,1\}^i \to \{0,\ldots,2^i-1\}$ be the mappings between small integers and their representation as $i$-bit strings. That is, $\mathrm{unbin}_i (x_1,\ldots,x_i)=\sum_{j=1}^i2^{i-j}x_i$ and $\mathrm{bin}_i=(\mathrm{unbin}_i)^{-1}$.
    
    To apply our equivalence algorithm on distributions over $[N]$, every $t \in [N]$ is interpreted as an $\ell$-bit string (using the map $t \to \mathrm{bin}_\ell(t-1)$). Then every prefix query $\tau|^{x|_{[i-1]} = w}$ (respectively $\mu|^{x|_{[i-1]} = w}$) is simulated using the interval query $\tau|^{\{a,\ldots,b\}}$ (respectively $\mu|^{\{a,\ldots,b\}}$), where $a = 2^{N-i} \mathrm{unbin}_{i-1}(w) + 1$ and $b = 2^{N-i} (\mathrm{unbin}_{i-1}(w) + 1)$. Since the simulated algorithm uses $\tilde{O}(\ell / \eps^2)$ prefix queries, the simulation uses $\tilde{O}((\log N) / \eps^2)$ interval queries.
\end{proof}

\subsection{Proof of Lemma \ref{lemma:closeness-linearithmic-fd-soundness}}
\label{sec:closeness-main-path:subsec:soundness}
The proof works by comparing the TV-distance and the slice-wise chi-square divergence with the KL-divergence, which is equal to its slice-wise version as per the following lemmas.\begin{lemma}[Basic chain rule for $\dkl$, folklore]
    Let $\mu$ and $\tau$ be distributions over $\Omega$. Then for every random variable $X : \Omega\to R$, we have the equality $\dkl(\mu,\tau) = \dkl(X(\mu),X(\tau)) + \E_{x \sim X(\tau)}\left[\dkl\left(\mu|^{X = x}, \tau|^{X=x}\right)\right]$.
\end{lemma}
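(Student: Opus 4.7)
The plan is to expand the definition of $\dkl(\mu,\tau)$ and partition the sample space $\Omega$ according to the level sets $X^{-1}(x)$ for $x$ in the range $R$ of $X$. After that, the statement reduces to straightforward algebra.

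First, I would write
$$\dkl(\mu,\tau) = \sum_{\omega \in \Omega} \mu(\omega) \log_2 \frac{\mu(\omega)}{\tau(\omega)} = \sum_{x \in R} \sum_{\omega \in X^{-1}(x)} \mu(\omega) \log_2 \frac{\mu(\omega)}{\tau(\omega)}.$$
For $\omega \in X^{-1}(x)$, I would use the factorization $\mu(\omega) = X(\mu)(x)\cdot \mu|^{X=x}(\omega)$ (and similarly for $\tau$), valid whenever $X(\mu)(x)>0$, to split the logarithm as
$$\log_2 \frac{\mu(\omega)}{\tau(\omega)} = \log_2 \frac{X(\mu)(x)}{X(\tau)(x)} + \log_2 \frac{\mu|^{X=x}(\omega)}{\tau|^{X=x}(\omega)}.$$
Contributions from $x$ with $X(\mu)(x)=0$ vanish under the convention $0\log 0 = 0$, and finiteness of $\dkl(\mu,\tau)$ forces $X(\tau)(x)>0$ whenever $X(\mu)(x)>0$, so the quotients are well-defined where needed.

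Next, I would split the double sum along these two terms. In the first piece, the inner sum $\sum_{\omega \in X^{-1}(x)} \mu(\omega)$ collapses to $X(\mu)(x)$, yielding $\sum_{x} X(\mu)(x) \log_2 \frac{X(\mu)(x)}{X(\tau)(x)} = \dkl(X(\mu),X(\tau))$. In the second piece, I factor out $X(\mu)(x)$ from the inner sum (writing $\mu(\omega)=X(\mu)(x)\cdot \mu|^{X=x}(\omega)$) to obtain $X(\mu)(x)\cdot \dkl(\mu|^{X=x}, \tau|^{X=x})$; summing over $x$ then produces an expectation.

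There is no real obstacle: this is pure bookkeeping. The only subtle point worth flagging is that the expansion naturally produces an expectation with respect to $X(\mu)$ rather than $X(\tau)$ as written in the statement, so I would either interpret the displayed $X(\tau)$ as a typo for $X(\mu)$ (matching the standard chain rule, which is the form used in all subsequent applications) or, alternatively, note that when both are distributions with the same support the argument proceeds identically with the roles kept as written, with the understanding that the resulting identity is the usual chain rule in the convention $\dkl(\mu,\tau)=\E_{x\sim\mu}\log_2(\mu(x)/\tau(x))$ given earlier in the preliminaries.
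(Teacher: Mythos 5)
The paper itself offers no proof of this lemma --- it is stated as ``folklore'' and used directly. Your proof is the standard one (partition $\Omega$ by level sets of $X$, factor $\mu(\omega) = X(\mu)(x)\cdot\mu|^{X=x}(\omega)$, split the logarithm, and recollect the two resulting sums), and it is correct. You are also right to flag the apparent typo: the expectation in the displayed identity should be with respect to $X(\mu)$, not $X(\tau)$, as confirmed both by the standard chain rule and by the paper's own Lemma \ref{lemma:chain-rule-dkl-01n} (the iterated form), which takes $\E_{w\sim\mu}$. Your handling of the zero-probability and infinity edge cases under the usual conventions is also fine. Nothing more is needed here.
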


\begin{lemma}[Chain rule for $\dkl$, repeated form] \label{lemma:chain-rule-dkl-01n}
    Let $\mu$ and $\tau$ be distributions over $\{0,1\}^n$. Then $\dkl(\mu,\tau) = \sum_{i=1}^n \E_{w \sim \mu}\left[ \dkl(\mu|_i^{x_{[i-1]} = w_{[i-1]}}, \tau|_i^{x_{[i-1]} = w_{[i-1]}}) \right] = \Delta_\kl(\mu,\tau)$.
\end{lemma}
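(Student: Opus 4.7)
The plan is to prove the claim by induction on $n$, iterating the basic chain rule from the preceding lemma. The base case $n = 1$ is immediate: the sum collapses to a single $i = 1$ term in which the conditioning prefix $x_{[0]}$ is empty, reducing the right-hand side to $\E_{w \sim \mu}[\dkl(\mu|_1, \tau|_1)] = \dkl(\mu, \tau)$.

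For the inductive step, I would apply the basic chain rule with $X$ taken as the projection to the first coordinate $x \mapsto x_1$, yielding
\[
\dkl(\mu, \tau) \;=\; \dkl(\mu|_1, \tau|_1) \;+\; \E_{w_1 \sim \mu|_1}\!\left[\dkl\!\left(\mu|^{x_1 = w_1},\, \tau|^{x_1 = w_1}\right)\right].
\]
For every fixed $w_1$, the conditional distributions $\mu|^{x_1 = w_1}$ and $\tau|^{x_1 = w_1}$ can be canonically identified with distributions over $\{0,1\}^{n-1}$ supported on the coordinates $2, \ldots, n$. Applying the induction hypothesis (with an index shift $i \mapsto i-1$) expresses their divergence as a sum over $i = 2, \ldots, n$ of expected conditional KL-divergences of the $i$-th marginals, where the outer expectation ranges over $w' \sim (\mu|^{x_1 = w_1})|_{\{2,\ldots,i-1\}}$.

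To finish, I would invoke the tower property: the nested draws $w_1 \sim \mu|_1$ and then $w' \sim (\mu|^{x_1 = w_1})|_{\{2,\ldots,i-1\}}$ together yield a prefix $w_{[i-1]} = (w_1, w')$ distributed as $\mu|_{[i-1]}$, so the iterated expectation collapses to $\E_{w \sim \mu}[\cdots]$ (the integrand depending only on the prefix). These produce the $i = 2, \ldots, n$ summands of $\Delta_\kl(\mu, \tau)$, while the standalone $\dkl(\mu|_1, \tau|_1) = \E_{w \sim \mu}[\dkl(\mu|_1^{x_{[0]} = w_{[0]}}, \tau|_1^{x_{[0]} = w_{[0]}})]$ supplies the $i = 1$ summand. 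The only real obstacle is notational bookkeeping: verifying that iterated conditionings such as $(\mu|^{x_1 = w_1})|_i^{x_{\{2,\ldots,i-1\}} = w'}$ coincide with $\mu|_i^{x_{[i-1]} = (w_1, w')}$, and that $\E_{w \sim \mu|_{[i-1]}}$ may be replaced by $\E_{w \sim \mu}$ since the integrand depends only on the prefix; both follow immediately from the definitions of conditional and index-restricted distributions.
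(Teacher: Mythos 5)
Your proof is correct and matches the paper's approach exactly: the paper's entire proof is the single sentence that the result follows ``by repeatedly applying the basic chain rule using induction,'' which you have simply fleshed out with the first-coordinate projection and the tower-property bookkeeping. (Incidentally, you have used the correct outer expectation $\E_{w_1\sim\mu|_1}$, even though the paper's statement of the basic chain rule writes $\E_{x\sim X(\tau)}$ where $\E_{x\sim X(\mu)}$ is what is actually needed for consistency with the repeated form and with the $\E_{w\sim\mu}$ in the lemma being proved.)
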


\begin{proof}
    We obtain this result by repeatedly applying the basic chain rule using induction.
\end{proof}

We recall two folklore bounds for the KL-divergence of Bernoulli distributions.

\begin{lemma}[Well known]
    For $p,q \in [0,1]$, $\kl(p, q) \le \frac{(p-q)^2}{q(1-q)}$. More formally, in $(0,1) \times (0,1)$, the ratio between these expressions is a non-negative continuous function that is bounded by $1$.
\end{lemma}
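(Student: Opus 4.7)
The plan is to derive the inequality from the elementary bound $\ln x \le x - 1$ (valid for all $x > 0$), applied twice: setting $x = p/q$ yields $\ln(p/q) \le (p-q)/q$, and setting $x = (1-p)/(1-q)$ yields $\ln((1-p)/(1-q)) \le (q-p)/(1-q)$. Multiplying the first inequality by $p$ and the second by $1-p$, summing, and using the telescoping identity $p(1-q) - q(1-p) = p - q$, one obtains the natural-log form
\[ p \ln(p/q) + (1-p)\ln((1-p)/(1-q)) \le \tfrac{p(p-q)}{q} + \tfrac{(1-p)(q-p)}{1-q} = \tfrac{(p-q)^2}{q(1-q)}, \]
which is the main content of the lemma (up to the $\log_2$ versus $\ln$ normalization, which only contributes an absolute-constant factor).

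For the ``non-negative continuous function bounded by $1$'' reformulation on $(0,1) \times (0,1)$, I would proceed as follows. Off the diagonal $p = q$, the denominator $(p-q)^2/(q(1-q))$ is strictly positive and $\kl(p,q) \ge 0$ by Gibbs' inequality, so the ratio is well-defined, non-negative and continuous on $(0,1)^2 \setminus \{p = q\}$, and the first paragraph bounds it by $1$ there. The only delicate point is continuous extension across the diagonal, where both numerator and denominator vanish. The plan is to compute a second-order Taylor expansion of $q \mapsto \kl(p,q)$ at $q = p$: since $\partial_q \kl(p,q)|_{q=p} = 0$ and $\partial_q^2 \kl(p,q)|_{q=p} = 1/(p(1-p)\ln 2)$, one obtains $\kl(p,q) = \tfrac{(p-q)^2}{2 p(1-p) \ln 2} + O((p-q)^3)$ as $q \to p$, so the ratio admits a continuous extension whose value along the diagonal is $1/(2\ln 2)$, well below $1$.

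The derivation is essentially routine; the only obstacle is clerical, namely tracking the $\ln 2$ factor when moving between the natural-log proof and the $\log_2$ form of $\kl$, and remembering to go to second order in the Taylor expansion (since the first-order term at $q = p$ vanishes) when continuously extending the ratio across the diagonal.
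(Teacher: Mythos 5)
The paper gives no proof of this lemma at all; it merely labels it ``well known,'' so there is no internal argument for me to compare against. The only question is whether your argument is correct on its own terms.

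The elementary derivation in your first paragraph is correct: two applications of $\ln x\le x-1$, multiplication by $p$ and $1-p$, and the telescoping identity $p(1-q)-q(1-p)=p-q$ give $p\ln\frac{p}{q}+(1-p)\ln\frac{1-p}{1-q}\le\frac{(p-q)^2}{q(1-q)}$, and the second-order Taylor expansion across the diagonal is computed correctly. The gap lies in your second paragraph, where you assert that ``the first paragraph bounds it by $1$.'' Your first paragraph bounds the \emph{natural-log} ratio by $1$; since the paper defines $\kl$ with $\log_2$, what you actually obtain is $\kl(p,q)\le\frac{1}{\ln 2}\cdot\frac{(p-q)^2}{q(1-q)}$, so the ratio is bounded by $\frac{1}{\ln 2}\approx 1.44$, not by $1$. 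You flagged the normalization once, but then contradicted yourself by asserting the bound of $1$ anyway. This is not a gap you can plug by working harder: the lemma's ``bounded by $1$'' clause is in fact \emph{false} for $\log_2$-based $\kl$. For instance, with $p=0.001$ and $q=0.1$ one computes $\kl(p,q)\approx 0.144$ while $\frac{(p-q)^2}{q(1-q)}\approx 0.109$, a ratio of about $1.32>1$. The well-known inequality with constant $1$ holds only for KL in nats; with the paper's $\log_2$ normalization the correct constant is $1/\ln 2$, which is exactly what your derivation establishes. So your proof is right, your stated conclusion is not, and the discrepancy traces to an error in the paper's own lemma statement rather than to your argument.
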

\begin{lemma}[Direct corollary] \label{lemma:KL-bound-aq-q}
    Let $p,q \in [0,1]$. If $p = aq$ for some real $a$, then
    \[\kl(aq, q) \le \frac{(a-1)^2}{1-q} q\]
\end{lemma}

Based on these bounds, the following technical lemma provides some connection between the chi-square distance and the KL-divergence of Bernoulli distributions.

\begin{lemma} \label{lemma:chi-square-pseudolinear-with-dkl}
    Let $p, q \in [0,1]$. Then $\chi^2(p,q) \ge \frac{1}{12} \kl(p,q) / \log \max\{\frac{1}{q}, \frac{1}{1-q}\}$.
\end{lemma}

\begin{proof}
    We actually prove that $\chi^2(p,q) / \kl(p,q) \ge \frac{1}{12 \log \max\{\frac{1}{q}, \frac{1}{1-q}\}}$ for every $p\in [0,1]$ and $q \in (0,1)$. For the edge cases of $q$ (which we do not use in our proofs anyway except when $p=q$), the bound remains correct by considering the limit of $\kl(p,q) / \log \max\{\frac{1}{q}, \frac{1}{1-q}\}$.
    
    Without loss of generality, $q \le \frac{1}{2}$. We can assume so since $\chi^2(p,q) = \chi^2(1-p, 1-q)$, $\kl(p,q) = \kl(1 - p, 1 - q)$ and $\max\{\log 1/q, \log 1/(1-q)\} = \max\{\log 1/(1-q), \log 1/(1-(1-q))\}$. Based on this assumption it is sufficient to show that $\frac{\chi^2(p,q)}{\kl(p,q)} \ge \frac{1}{12 \log q^{-1}}$.

    Let $a = p/q$ ($0 < a \le 1/q$). If $a \ge 2$:
    \begin{eqnarray*}
        \chi^2(aq,q) &\ge& \frac{(a-1)^2}{2(a+1)}q \\
        \kl(aq,q) &\le& p \log \frac{p}{q} \le a \log q^{-1} \cdot q \\
        \frac{\chi^2(aq,q)}{\kl(aq,q)} &\ge& \frac{(a-1)^2}{2a(a+1) \log q^{-1}} \ge \frac{1}{12 \log q^{-1}}
    \end{eqnarray*}
    If $a \le 2$:
    \begin{eqnarray*}
        \chi^2(aq,q) &\ge& \frac{(a-1)^2}{2(a+1)}q \\
        \kl(aq,q) &\le& \frac{(a-1)^2}{1-q} q \\
        \frac{\chi^2(aq,q)}{\kl(aq,q)} &\ge& \frac{1-q}{2(a+1)} \ge \frac{1}{12 \log q^{-1}}
    \end{eqnarray*}
    Where the second equation follows from Lemma \ref{lemma:KL-bound-aq-q}, and the very last inequality in the third equation uses the assumption that $q\le\frac12$.
\end{proof}

At this point we prove a version of Lemma \ref{lemma:closeness-linearithmic-fd-soundness} for a parameterized restricted case. We then reduce the general case to this lemma using an appropriate parameter. We note here that if we were to use the following parameterized lemma directly instead of Lemma \ref{lemma:closeness-linearithmic-fd-soundness}, we would have obtained a direct generalization of the binary setting part of \cite[Theorem 4.1]{horrible}.

\begin{lemma} \label{lemma:closeness-linearithmic-fd-soundness-if-nice}
    Let $0 < q < \frac{1}{2}$. Assume that for every $1 \le i \le n$ and for every condition $w \in \{0,1\}^{i-1}$, the prefix marginal $\mu|_i^{x_{[i-1]} = w}$ is equivalent to $\tau|_i^{x_{[i-1]} = w}$ or equivalent to $\Ber(p_{i,w})$ for some $q \le p_{i,w} \le 1 - q$ (or both). Then $\Delta_{\chi^2}(\tau, \mu) \ge \frac{1}{6} (\dtv(\tau,\mu))^2 / \log q^{-1}$.
\end{lemma}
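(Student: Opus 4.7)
The plan is to chain three ingredients: the chain rule for KL-divergence, the pointwise bound between $\chi^2$ and $\kl$ from Lemma \ref{lemma:chi-square-pseudolinear-with-dkl}, and Pinsker's inequality. Applying Lemma \ref{lemma:chain-rule-dkl-01n} with the roles of $\mu$ and $\tau$ swapped gives
\[ \dkl(\tau,\mu) \;=\; \sum_{i=1}^n \E_{w \sim \tau}\left[\kl\bigl(\tau|_i^{x_{[i-1]} = w_{[i-1]}}(1),\, \mu|_i^{x_{[i-1]} = w_{[i-1]}}(1)\bigr)\right], \]
so $\Delta_{\chi^2}(\tau,\mu)$ and $\dkl(\tau,\mu)$ are both summations over the same index-conditioning pairs $(i,w)$, with $\E_{w\sim\tau}$ on the outside.

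Next, I would perform a termwise comparison. Fix $(i,w)$ and write $s = \tau|_i^{x_{[i-1]}=w_{[i-1]}}(1)$ and $t = \mu|_i^{x_{[i-1]}=w_{[i-1]}}(1)$. By the niceness hypothesis, either $s = t$ (so both $\chi^2(s,t)$ and $\kl(s,t)$ vanish and every comparison between them holds trivially), or $t = p_{i,w} \in [q, 1-q]$. In the latter case, Lemma \ref{lemma:chi-square-pseudolinear-with-dkl} yields $\chi^2(s,t) \ge \frac{1}{12}\kl(s,t)/\log\max\{1/t, 1/(1-t)\}$, and because $t \in [q, 1-q]$ we have $\log\max\{1/t, 1/(1-t)\} \le \log(1/q)$, hence $\chi^2(s,t) \ge \frac{1}{12 \log q^{-1}} \kl(s,t)$. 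Summing over $i$ and taking expectation over $w \sim \tau$, and invoking the chain rule above, gives
\[ \Delta_{\chi^2}(\tau,\mu) \;\ge\; \frac{1}{12 \log q^{-1}}\, \Delta_\kl(\tau,\mu) \;=\; \frac{\dkl(\tau,\mu)}{12 \log q^{-1}}. \]

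Finally, I would apply Pinsker's inequality. Since our $\dkl$ is in bits, it reads $\dkl(\tau,\mu) \ge \frac{2}{\ln 2}\,(\dtv(\tau,\mu))^2 \ge 2\,(\dtv(\tau,\mu))^2$, and combined with the previous display this gives $\Delta_{\chi^2}(\tau,\mu) \ge \frac{1}{6}(\dtv(\tau,\mu))^2/\log q^{-1}$, as required. The argument is essentially bookkeeping; the only genuine technical ingredient, Lemma \ref{lemma:chi-square-pseudolinear-with-dkl}, is handed to us, and the niceness hypothesis is precisely what is needed to replace the $\log\max\{1/t, 1/(1-t)\}$ factor arising there by the uniform quantity $\log q^{-1}$ that appears in the conclusion. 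The one point that requires care is using the chain rule with $w \sim \tau$ rather than $w \sim \mu$, which is immediate from Lemma \ref{lemma:chain-rule-dkl-01n} by swapping the two distributions.
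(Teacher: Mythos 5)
Your proof is correct and follows essentially the same path as the paper's: chain rule for KL (Lemma \ref{lemma:chain-rule-dkl-01n}) applied to $\dkl(\tau,\mu)$, a termwise comparison via Lemma \ref{lemma:chi-square-pseudolinear-with-dkl} using the niceness hypothesis to uniformize the $\log$ factor, and Pinsker's inequality. The only difference is cosmetic — you apply Pinsker at the end while the paper opens with it — so the two arguments are the same up to step ordering.
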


\begin{proof}
    By Pinsker's inequality and Lemma \ref{lemma:chain-rule-dkl-01n} we obtain:
    \[2(\dtv(\tau, \mu))^2 \le \dkl(\tau, \mu)
    = \sum_{i=1}^n \E_{w \sim \tau} \left[\kl\left(\tau|_i^{x_{[i-1]} = w_{[i-1]}}(1), \mu|_i^{x_{[i-1]} = w_{[i-1]}}(1)\right)\right] \]
    By our assumption, for every $1\leq i\leq n$, if $\dkl(\tau|_i^{x_{[i-1]} = w_{[i-1]}}, \mu|_i^{x_{[i-1]} = w_{[i-1]}}) \neq 0$ then the probability of $\mu|_i^{x_{[i-1]} = w_{[i-1]}}$ to draw $1$ is between $q$ and $1 - q$. By Lemma \ref{lemma:chi-square-pseudolinear-with-dkl} we obtain:
    \[2(\dtv(\tau, \mu))^2 \le \sum_{i=1}^n \E_{w \sim \tau} \left[\chi^2\left(\tau|_i^{x_{[i-1]} = w_{[i-1]}}(1), \mu|_i^{x_{[i-1]} = w_{[i-1]}}(1)\right) \cdot 12 \log q^{-1} \right] \]
    That is,
    \[\frac{(\dtv(\tau, \mu))^2}{6\log q^{-1}}
    \le \sum_{i=1}^n \E_{w \sim \tau} \left[\chi^2\left(\tau|_i^{x_{[i-1]} = w_{[i-1]}}(1), \mu|_i^{x_{[i-1]} = w_{[i-1]}}(1)\right)  \right]
    = \Delta_{\chi^2}(\tau, \mu) \qedhere \]
\end{proof}

We could bound the KL-divergence of single bits using the $\chi^2$ distance only because we assumed that the marginal probabilities of $\mu$ are not too close to $0$ or $1$ (unless they are equal to their counterparts in $\tau$). In the general case we cannot assume it, hence we need to instead consider a distribution $\mu'$ which is close to $\mu$ while satisfying this assumption.

We consider $\mu$ as a ``probability tree'', where the root represents the empty string, every edge represents an additional bit, and every leaf represents a complete sample. This tree (and hence the distribution $\mu$) is fully determined using probabilities of the form $\Pr_{x \sim \mu}[x_i = 1 | x_{[i-1]} = w]$, where $1 \le i \le n$ and $w \in \{0,1\}^{i-1}$.

We construct another distribution $\mu'$ based on such a tree pattern. For every $1 \le i \le n$ and $w \in \{0,1\}^{i-1}$, we set $\Pr_{x \sim \mu'}[x_i = 1 | x_{[i-1]} = w]$ as follows:
\begin{eqnarray*}
    \min\left\{\frac{\dtv(\tau, \mu)}{2n}, \Pr_{x \sim \tau}[x_i = 1 | x_{[i-1]} = w]\right\} &\mathrm{if}& \Pr_{x \sim \mu}[x_i = 1 | x_{[i-1]} = w] < \frac{\dtv(\tau,\mu)}{2n} \\
    1 - \min\left\{\frac{\dtv(\tau, \mu)}{2n}, \Pr_{x \sim \tau}[x_i = 0 | x_{[i-1]} = w]\right\} &\mathrm{if}& \Pr_{x \sim \mu}[x_i = 1 | x_{[i-1]} = w] > 1- \frac{\dtv(\tau,\mu)}{2n} \\
    \Pr_{x \sim \mu}[x_i = 1 | x_{[i-1]} = w] \phantom{\bigg\}} && \mathrm{otherwise}
\end{eqnarray*}
Observe that for every $1 \le i \le n$ and $w \in \{0,1\}^{i-1}$, $\dtv\left(\mu|_i^{x_{[i-1]} = w}, \mu'|_i^{x_{[i-1]} = w}\right) \le \frac{\dtv(\tau,\mu)}{2n}$. Hence, $\dtv(\mu,\mu') \le \Delta_\mathrm{TV}(\mu,\mu') \le \frac{1}{2}\dtv(\tau,\mu)$. By the triangle inequality, $\dtv(\tau,\mu') \ge \frac{1}{2}\dtv(\tau,\mu)$.

Since the assumptions of Lemma \ref{lemma:closeness-linearithmic-fd-soundness-if-nice} hold for $\mu'$ (with $q = \frac{1}{2n}\dtv(\tau,\mu)$), we can now conclude the proof of Lemma \ref{lemma:closeness-linearithmic-fd-soundness}:
{
\allowdisplaybreaks
\begin{align*}
    \Delta_{\chi^2}(\tau, \mu)
    &=&& \sum_{i=1}^n \E_{w \sim \tau} \left[\chi^2\left(\tau|_i^{x_{[i-1]} = w_{[i-1]}}(1), \mu|_i^{x_{[i-1]} = w_{[i-1]}}(1)\right)\right] \\
    \text{[Monotonicity of $\chi^2$]}
    &\ge&& \sum_{i=1}^n \E_{w \sim \tau} \left[\chi^2\left(\tau|_i^{x_{[i-1]} = w_{[i-1]}}(1), \mu'|_i^{x_{[i-1]} = w_{[i-1]}}(1)\right)\right] \\
    \text{[Lemma \ref{lemma:closeness-linearithmic-fd-soundness-if-nice} with $\textstyle q = \frac{1}{2n}\dtv(\tau,\mu)$]}
    &\ge&& \lfrac{(\dtv(\tau,\mu'))^2}{6 \log \frac{2n}{\dtv(\tau,\mu)}} \\
    \text{[$\textstyle\dtv(\tau,\mu') \ge \frac{1}{2}\dtv(\tau,\mu)$]}
    &\ge&& \lfrac{(\dtv(\tau,\mu))^2}{24 \log \frac{2n}{\dtv(\tau,\mu)}} & \!\!\!\!\qed
\end{align*}
}

\subsection{Proof of Lemma \ref{lemma:single-bit-chi-square-test}}
\label{sec:closeness-main-path:subsec:chi-square-test}

Here we prove the lemmas deferred from Section \ref{sec:closeness-main-path}. We start with some helper lemmas.

\begin{lemma} \label{lemma:chi-square-helper-1}
    Let $0 \le p \le q \le 1$ for which $p + q \le 1$ and let $X$ be the sum of $N$ independent bits drawn from $\Ber(p)$. Then,
    \[
        \Pr[X \ge \frac{1}{2}(p + q)N] \le e^{-\frac{1}{12} \chi^2(p, q)} 
    \]
\end{lemma}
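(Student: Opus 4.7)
The plan is to apply the multiplicative Chernoff bound to the binomial random variable $X\sim\Bin(N,p)$ and then convert the resulting exponent into the $\chi^2$ form. Setting $\delta=(q-p)/(2p)$ gives $(1+\delta)p=(p+q)/2$, so that the event in question is precisely $\{X\ge(1+\delta)pN\}$. The standard upper-tail Chernoff inequality then yields
\[
\Pr\!\left[X\ge(1+\delta)pN\right]\;\le\;\exp\!\left(-\tfrac{\delta^2 pN}{2+\delta}\right),
\]
and a short algebraic simplification reduces the exponent to $-(q-p)^2 N/\bigl(2(3p+q)\bigr)$.

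It then remains to verify that this derived exponent dominates $\tfrac{1}{12}\chi^2(p,q)$. Substituting $\chi^2(p,q)=(p-q)^2/\bigl((p+q)(2-(p+q))\bigr)$ and using $N\ge 1$, after cancelling the common factor $(p-q)^2$ the required inequality reduces to $6(p+q)(2-(p+q))\ge 3p+q$. This is where the hypothesis $p+q\le 1$ enters: it gives $2-(p+q)\ge 1$, so the left-hand side is at least $6(p+q)=6p+6q$, which clearly exceeds $3p+q$.

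The argument is fundamentally routine, consisting of a standard Chernoff step followed by a short algebraic comparison, so I do not foresee any substantive obstacle. The only delicate point is tracking the hypothesis $p+q\le 1$ precisely where it is needed, namely to lower-bound the factor $2-(p+q)$ in the denominator of $\chi^2$; without it, one would require a stronger constraint on $p$ and $q$ to obtain the stated bound.
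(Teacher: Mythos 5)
Your proof is correct, and it takes a cleaner route than the paper's. The paper sets $\delta = q-p$ and splits into the cases $\delta \le 2p$ and $\delta > 2p$ (i.e., whether the relative deviation $\delta/(2p)$ is at most $1$ or not), applying the two separate relaxations of the multiplicative Chernoff bound, namely $\exp(-\delta''^2\mu/3)$ for $\delta'' \le 1$ and $\exp(-\delta''\mu/3)$ for $\delta'' > 1$, and then reconciling each case with $\chi^2(p,q)$ separately. You instead apply the single unified bound $\Pr[X \ge (1+\delta'')\mu] \le \exp\bigl(-\delta''^2\mu/(2+\delta'')\bigr)$, which dominates both relaxations and is valid for all $\delta'' \ge 0$, collapsing the exponent to $(q-p)^2 N / \bigl(2(3p+q)\bigr)$ in one step. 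The final comparison $6(p+q)(2-(p+q)) \ge 3p+q$ then follows exactly as you say from $p+q\le 1$. I verified the algebra: with $\delta'' = (q-p)/(2p)$ one has $\delta''^2 p = (q-p)^2/(4p)$ and $2+\delta'' = (3p+q)/(2p)$, giving the claimed exponent. This avoids the case split entirely and is the more economical argument; the paper's version is slightly more self-contained in that it relies only on the most commonly quoted Chernoff forms. (One small remark that applies to both proofs: the lemma statement as written omits the factor of $N$ in the exponent, which your derivation correctly carries; this is a typo in the statement, not an issue with your proof.)
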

\begin{proof}
    Let $\delta = q - p$. If $\delta \le 2p$ then by Chernoff's bound:
    \begin{eqnarray*}
        \Pr[X \ge \frac{1}{2}(p + q)N]
        &=& \Pr[X \ge (1 + \frac{\delta}{2p}) \E[X]] \\
        &\le& e^{-\frac{\delta^2}{12p^2} pN}
        = e^{-\frac{\delta^2}{12p} N}
        \le e^{-\frac{(p-q)^2}{12(p + q)(2 - (p+q))} N}
        = e^{-\frac{1}{12}\chi^2(p,q) N}
    \end{eqnarray*}
    If $\delta > 2p$ then:
    \begin{eqnarray*}
        \Pr[X \ge \frac{1}{2}(p + q)N]
        &=& \Pr[X \ge (1 + \frac{\delta}{2p}) \E[X]] \\
        &\le& e^{-\frac{\delta}{6p} pN}
        = e^{-\frac{\delta}{6} N}
        \le e^{-\frac{(q-p)^2}{6(p+q)}}
        \le e^{-\frac{1}{12} \chi^2(p,q) N}
    \end{eqnarray*}  
    
    \vspace{-\baselineskip-\parskip}   
\end{proof}

\begin{lemma} \label{lemma:chi-square-helper-2}
    Let $0 \le p \le q \le 1$ for which $p + q \le 1$ and let $X$ be the sum of $N$ independent bits drawn from $\Ber(q)$. Then,
    \[
        \Pr[X \le \frac{1}{2}(p + q)N] \le e^{-\frac{1}{8} \chi^2(p, q)} \le e^{-\frac{1}{12} \chi^2(p, q)}
    \]
\end{lemma}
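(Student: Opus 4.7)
The plan is to mirror the proof of Lemma~\ref{lemma:chi-square-helper-1}, but to apply the lower-tail multiplicative Chernoff bound to $X \sim \mathrm{Bin}(N,q)$ instead of the upper-tail bound. In contrast to Lemma~\ref{lemma:chi-square-helper-1}, no case split will be needed: the deviation $\delta = q - p$ always satisfies $\delta \le q$, so the quadratic Chernoff regime always applies. This is what ultimately lets us produce the slightly better constant $\frac{1}{8}$ instead of $\frac{1}{12}$.

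Concretely, I would set $\delta = q - p \ge 0$ and rewrite the event as
\[ \Pr\!\left[X \le \tfrac{1}{2}(p+q)N\right] \;=\; \Pr\!\left[X \le \bigl(1 - \tfrac{\delta}{2q}\bigr)\, qN\right]. \]
Since $\tfrac{\delta}{2q} \in [0,1]$ (with the boundary case $q=0$ forcing $p=0$ and making the statement trivial), the standard lower-tail Chernoff inequality $\Pr[X \le (1-\eta)\mu] \le e^{-\mu \eta^2 / 2}$ gives
\[ \Pr\!\left[X \le \tfrac{1}{2}(p+q)N\right] \;\le\; \exp\!\left(-\,qN\cdot\frac{(\delta/2q)^2}{2}\right) \;=\; \exp\!\left(-\frac{\delta^2}{8q}\,N\right). \]

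The remaining task is to compare $\delta^2/(8q)$ with $\chi^2(p,q)/8 = \delta^2 / \bigl(8(p+q)(2-(p+q))\bigr)$. Using $p \le q$ and $p + q \le 1$, we have $p+q \ge q$ and $2-(p+q) \ge 1$, so $(p+q)(2-(p+q)) \ge q$, which yields $\delta^2/(8q) \ge \chi^2(p,q)/8$ and thus the first inequality of the lemma. The second inequality $e^{-\chi^2(p,q)N/8} \le e^{-\chi^2(p,q)N/12}$ is immediate from $\chi^2(p,q) \ge 0$.

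There is essentially no main obstacle here; the only care needed is the trivial handling of the degenerate case $q = 0$ (where also $p = 0$, so both sides of the stated inequality are equal) and the explicit verification of the elementary inequality $(p+q)(2-(p+q)) \ge q$ under the hypotheses $p \le q$ and $p + q \le 1$, which I laid out above.
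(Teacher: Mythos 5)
Your proof is correct and follows essentially the same approach as the paper: you set $\delta = q-p$, apply the lower-tail multiplicative Chernoff bound with $\eta = \delta/(2q) \le 1$ to get $e^{-\delta^2 N/(8q)}$, and then conclude via the elementary inequality $(p+q)(2-(p+q)) \ge q$. Your added remarks on why no case split is needed (unlike Lemma~\ref{lemma:chi-square-helper-1}) and on the degenerate case $q=0$ are accurate but not materially different from the paper's reasoning.
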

\begin{proof}
    Let $\delta = q - p \le q$. By Chernoff's bound:
    \begin{eqnarray*}
        \Pr[X \le \frac{1}{2}(p + q)N]
        &=& \Pr[X \le (1 - \frac{\delta}{2q}) \E[X]] \\
        &\le& e^{-\frac{\delta^2}{8q^2} qN}
        = e^{-\frac{\delta^2}{8q} N}
        \le e^{-\frac{(p-q)^2}{8(p + q)(2 - (p+q))} N}
        = e^{-\frac{1}{8}\chi^2(p,q) N}
    \end{eqnarray*}
    
    \vspace{-\baselineskip-\parskip}
\end{proof}

\begin{lemma} \label{lemma:chi-square-helper-3}
    Let $p,q \in [0,1]$. Let $X$ be the sum of $N$ independent bits drawn from $\Ber(p)$ and $Y$ be the sum of $N$ independent bits drawn from $\Ber(q)$. Then, with probability at least $\left(1 - e^{-\frac{1}{12}\chi^2(p,q)N}\right)^2$, the sign of $X-Y$ matches the sign of $p-q$.
\end{lemma}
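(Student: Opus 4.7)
The plan is to exploit the midpoint $m = \frac{1}{2}(p+q)N$ as a separator, using Lemma \ref{lemma:chi-square-helper-1} to push $X$ below $m$ and Lemma \ref{lemma:chi-square-helper-2} to push $Y$ above $m$ (or vice versa, depending on the sign of $p-q$), and then combining the two events by independence of $X$ and $Y$.

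First I would handle the trivial and symmetric cases. If $p=q$, then $\chi^2(p,q)=0$, the bound $(1-e^{-\chi^2(p,q)N/12})^2=0$ is vacuous, and there is nothing to prove. Otherwise, by swapping the roles of $(p,X)$ and $(q,Y)$ if necessary, assume $p < q$; the target becomes $\Pr[X<Y]\ge(1-e^{-\chi^2(p,q)N/12})^2$. Next I would reduce to the case $p+q\le 1$ that the two helper lemmas require: if $p+q>1$, replace $X$ with $N-X$ and $Y$ with $N-Y$, which are sums of $N$ independent bits from $\Ber(1-p)$ and $\Ber(1-q)$; since $\chi^2(p,q)=\chi^2(1-p,1-q)$ (the definition is symmetric under $p\mapsto1-p$, $q\mapsto1-q$) and since $X<Y$ iff $N-X>N-Y$, while simultaneously $1-p>1-q$, the statement in the new variables is the same statement with the roles of the two probabilities swapped. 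So after a further swap we reduce to $p<q$ with $p+q\le 1$.

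With these reductions in place, I would set $m=\frac{1}{2}(p+q)N$ and apply the two helper lemmas directly:
\begin{align*}
\Pr[X < m] &\ge 1 - e^{-\chi^2(p,q)N/12} \quad\text{(Lemma \ref{lemma:chi-square-helper-1})},\\
\Pr[Y > m] &\ge 1 - e^{-\chi^2(p,q)N/12} \quad\text{(Lemma \ref{lemma:chi-square-helper-2})}.
\end{align*}
Note that Lemma \ref{lemma:chi-square-helper-2} bounds $\Pr[Y\le m]$; the complement $\{Y>m\}$ is exactly what we need (and since $Y\le m$ and $Y>m$ partition the sample space, no boundary issue arises).

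Finally, since $X$ and $Y$ are independent, the intersection $\{X<m\}\cap\{Y>m\}$ has probability at least $(1-e^{-\chi^2(p,q)N/12})^2$, and on this intersection $X<m<Y$, so $\mathrm{sign}(X-Y)=-1=\mathrm{sign}(p-q)$, as required. The only genuinely delicate step is the reduction $p+q>1\rightsquigarrow p+q\le 1$, which I expect to be the main obstacle of the writeup — everything else is a direct assembly of the two preceding helper lemmas with an independence argument.
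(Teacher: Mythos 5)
Your proposal is correct and follows essentially the same route as the paper: a WLOG reduction to $p+q\le 1$ via the substitution $(p,q)\mapsto(1-q,1-p)$, handling $p=q$ as vacuous, then separating $X$ and $Y$ at the midpoint $\frac{1}{2}(p+q)N$ using Lemmas \ref{lemma:chi-square-helper-1} and \ref{lemma:chi-square-helper-2} and multiplying the two bounds by independence. Your writeup is a bit more explicit than the paper's about which helper lemma bounds which tail and about the mechanics of the reduction, but there is no substantive difference.
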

\begin{proof}
    Without loss of generality, $p + q \le 1$ (otherwise use $1-q$ and $1-p$ instead of $p$ and $q$, noting that $d(1-q,1-p) = d(p,q)$ and that the sign of $(1-q)-(1-p)$ matches the sign of $p-q$).

    If $p = q$ then the lemma is vacuously correct. If $p < q$, then by Lemma \ref{lemma:chi-square-helper-1},
    \begin{eqnarray*}
        \Pr[X - Y < 0] = \Pr[X < Y] &\ge& \Pr[X < \frac{1}{2}(p + q)N] \Pr[Y > \frac{1}{2}(p + q)N] \\
        &\ge& \left(1 - e^{\frac{1}{12} \chi^2(p,q) N} \right)^2
    \end{eqnarray*}
    The case where $p > q$ is analogous, using Lemma \ref{lemma:chi-square-helper-2}.
\end{proof}

We recall and prove Lemma \ref{lemma:single-bit-chi-square-test}:

\lemmaZsingleZbitZchiZsquareZtest*

\begin{proof}
    We repeat the following procedure $64$ times: let $N = \ceil{24 / \eps}$. Also, let $X$ be the sum of $N$ independent samples drawn from $\Ber(p)$ and $Y$ be the sum of $N$ independent samples drawn from $\Ber(q)$. To conclude a single trial we check whether $A>B$ or $A<B$ (or neither) holds.

    Let $A$ be the number of trials with $X>Y$ and $B$ be the number of trials with $X<Y$.
    If $|A|,|B| \le 40$ we accept ($p=q$), and otherwise we reject ($\chi^2(p,q) > \eps$).

    If $p=q$, then by symmetry, $\E[X < Y] = \E[X > Y] \le \frac{1}{2}$. That is, $\E[A] = \E[B] \le 32$. By Chernoff's bound, $\Pr[A \ge 41] < e^{-2 \cdot 9^2 / 64} < \frac{1}{6}$ and $\Pr[B \ge 41] < \frac{1}{6}$. Hence, the probability to accept is at least $1 - \frac{2}{6} = \frac{2}{3}$.

    If $\chi^2(p,q) > \eps$, then by Lemma \ref{lemma:chi-square-helper-3}, one of $\Pr[X < Y]$ and $\Pr[X > Y]$ is at least $(1 - e^{-\frac{1}{12} \chi^2(p,q)N})^2 \ge (1 - e^{-2})^2 > 0.74$. Without loss of generality, we assume that $p < q$. In this case, $\E[A] > 47.36$. By the Chernoff bound, the probability to reject is at least $1-\Pr[A \le 40] \ge 1-e^{-2 \cdot 7.36^2/64} > \frac{2}{3}$.
\end{proof}

\section{Lower bound for \propname{product}}
\label{sec:lbnd-product-main-path}

This section is devoted to the following lower bound:

\thZlbndZindependence

Let $\pi_n$ denote the uniform distribution over $\{0,1\}^n$. We show that distinguishing between $\pi_n$ (which is in particular a product distribution) and a distribution that is $\eps$-far from every product distribution requires $\tilde{\Omega}(\sqrt{n}/\eps^2)$ many queries.

Before we present our construction, we cite the corresponding lower bound for uniformity of a product distribution \cite{baynets17}:
\begin{lemma}[\cite{baynets17}] \label{lemma:bayesnets-sqrt-lbnd-uniformity}
    Let $\mathcal N$ be the following distribution over inputs: draw $b_1,\ldots,b_n \sim \{+1, -1\}$ uniformly and independently, and return the distribution $\prod_{i=1}^n \Ber\left(\frac{1}{2} + b_i \frac{\eps}{\sqrt{n}} \right)$. Then the drawn input is always $\Omega(\eps)$-far from $\pi_n$, and any unconditional sampling algorithm that distinguishes between inputs drawn from $\mathcal N$ and $\pi_n$ must take at least $\Omega(\sqrt{n} / \eps^2)$ many samples.
\end{lemma}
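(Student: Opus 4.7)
The proof has two parts. For the distance claim, I would first note that $\dtv(P_b, \pi_n)$ is the same for every $b$: negating $b_i$ turns $\Ber(\tfrac{1}{2}+b_i\eps/\sqrt{n})$ into $\Ber(\tfrac{1}{2}-b_i\eps/\sqrt{n})$, which is the pushforward of the former under the coordinate-flip $x_i\mapsto 1-x_i$, and $\pi_n$ is invariant under coordinate-flips. So take $b=(1,\ldots,1)$ and consider the statistic $S(x)=\sum_{i=1}^n (2x_i-1)$. Under $\pi_n$ we have $\E[S]=0$ and $\Var(S)=n$; under $P_b$ we have $\E[S]=2\eps\sqrt{n}$ and $\Var(S)\le n$. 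By Berry--Esseen both pushforwards are within $o(1)$ in TV of Gaussians with common variance $n$ whose means differ by $2\eps\sqrt{n}$, and their TV is $\Omega(\eps)$. Since TV cannot increase under $x\mapsto S(x)$, the claim follows.

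For the sample complexity, I would use Ingster's chi-squared method. Let $\mathcal N^{(m)}$ denote the mixture that first draws $b$ from the prior and then draws $m$ independent samples from $P_b$. The standard two-point bound gives $\dtv(\mathcal N^{(m)}, \pi_n^{\otimes m})^2 \le \tfrac{1}{4}(Z-1)$, where
\[
Z \;=\; \E_{b,b'}\!\left[\left(\sum_{x\in\{0,1\}^n} \frac{P_b(x)\,P_{b'}(x)}{\pi_n(x)}\right)^{\!m}\right]
\]
with $b,b'$ independent uniform on $\{\pm 1\}^n$. The inner sum factorizes over coordinates, and a one-coordinate calculation gives $\sum_x P_b(x)P_{b'}(x)/\pi_n(x) = \prod_{i=1}^n(1+4b_ib'_i\eps^2/n)$. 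Since the $b_ib'_i$ are independent uniform $\pm 1$, factoring the outer expectation yields
\[
Z \;=\; \left(\tfrac{1}{2}(1+4\eps^2/n)^m + \tfrac{1}{2}(1-4\eps^2/n)^m\right)^{\!n}.
\]
Only even powers of $4\eps^2/n$ survive the averaging inside, so $Z \le (1+O(m^2\eps^4/n^2))^n \le \exp(O(m^2\eps^4/n))$, which is $1+o(1)$ whenever $m=o(\sqrt{n}/\eps^2)$. Hence no $m$-sample algorithm distinguishes $\mathcal N^{(m)}$ from $\pi_n^{\otimes m}$ with constant success probability, giving the claimed $\Omega(\sqrt{n}/\eps^2)$ lower bound.

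The proof is largely routine bookkeeping; the one genuine subtlety is justifying that the leading $\binom{m}{2}(4\eps^2/n)^2$ term actually dominates the series $\sum_{k\ge 1}\binom{m}{2k}(4\eps^2/n)^{2k}$. In the target regime $m\lesssim\sqrt{n}/\eps^2$ one has $m^2\eps^4/n^2 = O(1/n)$, which makes the ratio between consecutive terms $o(1)$ and lets the entire tail be absorbed into an $\exp(O(m^2\eps^4/n))$ bound. The rest is essentially inevitable once one recognizes the Ingster framework and the product structure of $P_b$.
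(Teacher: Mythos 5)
Note first that the paper itself gives no proof of this lemma: it is imported as-is from \cite{baynets17}, so there is nothing internal to compare against and your argument has to be judged on its own. The sample-complexity half of your proposal is correct and is essentially the standard (Paninski/Ingster-style) proof used in that line of work: the chi-square bound $\dtv^2\le\frac14(Z-1)$ for the mixture versus $\pi_n^{\otimes m}$, the per-coordinate identity $\sum_x P_b(x)P_{b'}(x)/\pi_n(x)=\prod_{i=1}^n(1+4b_ib_i'\eps^2/n)$, and the factorized expectation are all right, and the tail issue you flag is handled cleanly by $\frac12\left((1+u)^m+(1-u)^m\right)\le\cosh(mu)\le e^{(mu)^2/2}$ with $u=4\eps^2/n$, giving $Z\le\exp(8m^2\eps^4/n)$ and hence the $\Omega(\sqrt n/\eps^2)$ bound.

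The one step that needs repair is the farness claim. The pushforward of $P_b$ (or of $\pi_n$) under $S(x)=\sum_i(2x_i-1)$ is supported on a lattice, so its total-variation distance to any Gaussian is $1$; Berry--Esseen controls the Kolmogorov (CDF) distance, not TV, so ``within $o(1)$ in TV of Gaussians'' is not literally available. The intended argument survives with a one-line fix: pick a threshold event such as $E=\{S>\eps\sqrt n\}$, use Berry--Esseen on each side to get $\Pr_{P_b}[E]-\Pr_{\pi_n}[E]\ge(2\Phi(\eps)-1)-O(1/\sqrt n)=\Omega(\eps)-O(1/\sqrt n)$, and invoke $\dtv\ge\Pr_{P_b}[E]-\Pr_{\pi_n}[E]$. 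As written this only gives $\Omega(\eps)$ when $\eps\gg1/\sqrt n$; if you want the statement for all small $\eps$ (or simply a CLT-free argument), the paper's own Lemma~\ref{lemma:marginal-ratio-to-dtv} --- an elementary binomial-ratio estimate showing that a product distribution whose marginals are all at least $\delta$-biased is $\frac1{16}\delta\sqrt n$-far from uniform --- applied with $\delta=\eps/\sqrt n$ gives exactly the $\Omega(\eps)$ farness.
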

In our construction, instead of adding a random bias for each coordinate, we partition the coordinates into pairs, and in each pair introduce a random ``anti-product bias'' as follows.

For $b \in \{0, +1, -1\}$, let $\nu_b$ be the following distribution over $\{0,1\}^2$:
\[\begin{array}{lcrclcr}
    \nu_b(00) &=& \frac{1}{4} + b_i \frac{\eps}{\sqrt{n}} &
    \nu_b(01) &=& \frac{1}{4} - b_i \frac{\eps}{\sqrt{n}} \\
    \nu_b(10) &=& \frac{1}{4} - b_i \frac{\eps}{\sqrt{n}} &
    \nu_b(11) &=& \frac{1}{4} + b_i \frac{\eps}{\sqrt{n}}
\end{array}\]
That is, $\nu_0$ is the uniform distribution over two bits, and $\nu_{+1}$, $\nu_{-1}$ are non-product distributions over two bits.

Let $\mathcal Y$ be the distribution over inputs that always returns the uniform distribution over $\{0,1\}^n$, and let $\mathcal{N}$ be the following distribution over ``bad'' inputs over $\{0,1\}^n$: we partition $[n]$ into fixed pairs (for example, $(2i-1,2i)$ for every $1 \le i \le \floor{n/2}$). For every pair we draw $b_i \in \{+1, -1\}$ uniformly and independently. The distribution of the $i$th pair is $\nu_{b_i}$. All pairs are independent of each other. We assume that $n$ is even, since for an odd $n$ we can just assume that the $n$th bit is distributed uniformly and independently of the rest.

The following lemma, whose proof appears in Subsection \ref{sec:lbnd-product-main-path:subsec:technical-proofs-product}, states that $\mathcal N$ always draws a distribution far from \propname{product}.

\begin{restatable}{lemma}{lemmaZsqrtZNZfarZfromZproduct} \label{lemma:sqrt-N-far-from-product}
    Every input distribution drawn from $\mathcal N$ is $\Omega(\eps)$-far from any product distribution.
\end{restatable}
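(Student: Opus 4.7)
The plan is to show that for every realization $b \in \{\pm1\}^{n/2}$, the distribution $\mu = \prod_{i=1}^{n/2}\nu_{b_i}$ satisfies $\dtv(\mu, \pi) = \Omega(\eps)$ for every product distribution $\pi = \prod_{j=1}^n \Ber(1/2 + \alpha_j)$ on $\{0,1\}^n$. Crucially, $\mu$ has uniform one-coordinate marginals while $\Cov_\mu(X_{2i-1}, X_{2i}) = b_i\eps/\sqrt{n}$, so the distinguishing signal sits in the pair-covariances; under any product $\pi$ the same covariance is simply $\alpha_{2i-1}\alpha_{2i}$. I would use the ``aligned pair-parity'' statistic
\[ C(x) \;:=\; \sum_{i=1}^{n/2} b_i\,(2x_{2i-1}-1)(2x_{2i}-1), \]
which is a sum of $n/2$ independent $\pm 1$ variables under both $\mu$ and $\pi$. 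A direct calculation gives $\E_\mu[C] = 2\eps\sqrt{n}$, $\E_\pi[C] = 4s$ with $s := \sum_i b_i\alpha_{2i-1}\alpha_{2i}$, and $\Var_\mu[C], \Var_\pi[C] \le n/2$.

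The argument then splits into two cases depending on $|s|$. If $|s| \le \eps\sqrt{n}/4$, the means of $C$ differ by at least $\eps\sqrt{n}$, a constant multiple of the standard deviation $\sqrt{n/2}$; a Berry--Esseen comparison of the one-dimensional distributions of $C$ yields $\dtv(\mu,\pi) \ge \dtv(C(\mu), C(\pi)) = \Omega(\eps)$. If instead $|s| > \eps\sqrt{n}/4$, then AM--GM applied pairwise gives $\sum_j \alpha_j^2 \ge \eps\sqrt{n}/2$, i.e., $\pi$ is significantly non-uniform. I would then switch to the $\pi$-adapted linear statistic $T(x) := \sum_j \alpha_j(2x_j - 1)$: since $\mu$ has uniform marginals, $\E_\mu[T] = 0$, while $\E_\pi[T] = 2\sum_j\alpha_j^2 \ge \eps\sqrt{n}$, with both variances bounded by $\sum_j \alpha_j^2$. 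An analogous Gaussian approximation again produces $\dtv(\mu, \pi) \ge \dtv(T(\mu), T(\pi)) = \Omega(\eps)$ (and in fact $\Omega(1)$ whenever $\eps\sqrt{n}$ exceeds an absolute constant).

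The main technical obstacle is making the Gaussian approximations rigorous: for a sum of $n/2$ independent bounded summands the Berry--Esseen additive error is $O(1/\sqrt{n})$, which is comfortably smaller than $c\eps$ in the natural parameter regime $\eps = \Omega(1/\sqrt{n})$, outside of which the claimed lower bound is information-theoretically vacuous. The overall structure mirrors the uniformity-vs-biased-product lower bound from \cite{baynets17} recalled in Lemma~\ref{lemma:bayesnets-sqrt-lbnd-uniformity}, with the anti-product pair construction here playing the role of the per-coordinate biased construction there.
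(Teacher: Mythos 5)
Your plan is a genuinely different route from the paper's. Where the paper partitions the pair indices into three sets (first bit of the pair biased, second bit biased, or neither) and applies an explicit likelihood-ratio/anti-concentration computation (Lemmas~\ref{lemma:binom-bound} and~\ref{lemma:marginal-ratio-to-dtv}) to whichever set has $\Omega(n)$ elements, you build two aligned linear test statistics $C$ and $T$, split on the size of $s=\sum_i b_i\alpha_{2i-1}\alpha_{2i}$, and invoke a CLT comparison. The reduction to a one-dimensional mean-separation argument is conceptually clean, and your computations of $\E_\mu[C]$, $\E_\pi[C]$, $\E_\pi[T]$ and the AM--GM step are correct.

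However, the Berry--Esseen step hides two genuine gaps. (i) You dismiss the regime $\eps=o(1/\sqrt n)$ as vacuous, but it is not: the theorem claims the $\Omega(\sqrt n/\eps^2)$ lower bound for every $\eps>0$, and for small $\eps$ this exceeds $n^{3/2}$, which is not trivially implied by anything (the paper's own \propname{product} \emph{upper} bound is only $\tilde O(n/\eps^2)$). The paper's Lemma~\ref{lemma:marginal-ratio-to-dtv} is stated with the explicit constraint $\delta<1/(4\sqrt n)$ and concrete constants valid once $n\ge 70$, so it covers all $\eps$ below an absolute constant; your CLT error term $O(1/\sqrt n)$ does not. (ii) For $T$, the Lyapunov/Berry--Esseen error is not $O(1/\sqrt n)$ but $O\bigl(\sum_j|\alpha_j|^3/V^{3/2}\bigr)$ with $V=\sum_j\alpha_j^2$, and this is $\Omega(1)$ whenever a bounded number of coordinates carry most of the $\alpha$-mass (e.g.\ a single $\alpha_j=1/2$ and the rest zero, which is compatible with $V>\eps\sqrt n/2$). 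There $T$ is nowhere near Gaussian and the comparison fails outright. You would need an extra preliminary case (``if some $|\alpha_j|$ exceeds a small absolute constant then $\pi$ and $\mu$ already disagree by $\Omega(1)$ on that single coordinate's marginal'') and then redo the Lyapunov bound under $\max_j|\alpha_j|$ small; as written, the step would fail. A smaller technical note: under $\mu$ the two bits of a pair are correlated, so $\Var_\mu[T]=V+8(\eps/\sqrt n)s$ rather than $\le V$; this is dominated but should be accounted for.
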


We now prove the indistinguishability of $\mathcal N$ from the uniform distribution, starting with the sampling model, to which we later show a reduction from the subcube conditional model.

\begin{lemma} \label{lemma:lbnd-sqrt-N-hard}
    An unconditional sampling algorithm that distinguishes between the uniform distribution and inputs that are drawn from $\mathcal N$ must make $\Omega(\sqrt{n} / \eps^2)$ many queries.
\end{lemma}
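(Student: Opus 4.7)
The plan is to reduce this sampling-model lower bound to the known bound in Lemma \ref{lemma:bayesnets-sqrt-lbnd-uniformity} by passing to the sequence of pairwise XORs $y_i = x_{2i-1} \oplus x_{2i}$ for $1 \le i \le n/2$. Under the uniform input drawn from $\mathcal{Y}$, the vector $y$ is distributed as $\pi_{n/2}$. Under an input drawn from $\mathcal{N}$ with bias vector $b \in \{\pm 1\}^{n/2}$, a direct calculation from the definition of $\nu_b$ gives $\Pr_{\nu_{b_i}}[x_{2i-1} \oplus x_{2i} = 1] = \frac{1}{2} - \frac{2 b_i \eps}{\sqrt{n}}$, and the pairs are independent, so $y$ is distributed as $\prod_i \Ber\!\bigl(\tfrac{1}{2} - \tfrac{2 b_i \eps}{\sqrt{n}}\bigr)$. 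Setting $n' = n/2$ and $\eps' = \eps\sqrt{2}$, the bias per coordinate is exactly $\eps'/\sqrt{n'}$, matching the hard family of Lemma~\ref{lemma:bayesnets-sqrt-lbnd-uniformity} (the sign flip on $b_i$ is irrelevant since the $b_i$'s are symmetric).

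Next I would carry out a sample-to-sample simulation showing that any distinguishing algorithm for $\mathcal{Y}$ versus $\mathcal{N}$ in the sampling model yields, with the same sample count, a distinguishing algorithm for $\pi_{n'}$ versus the XOR mixture. Given one sample $y \in \{0,1\}^{n/2}$ from the latter problem, draw fresh independent uniform bits $r_1, \ldots, r_{n/2}$ and output $x \in \{0,1\}^n$ defined by $x_{2i-1} = r_i$ and $x_{2i} = r_i \oplus y_i$. The verification I want to do carefully is that for every fixed $b_i \in \{+1, -1\}$, the joint distribution of $(x_{2i-1}, x_{2i})$ produced by this recipe is exactly $\nu_{b_i}$, which follows from the fact that, conditioned on any XOR value, both $\pi_2$ and $\nu_{b_i}$ place equal mass on each of the two consistent pair-outcomes; the same simulation works when $y_i$ is uniform, producing $\pi_2$. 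Independence across pairs in both worlds makes the simulation faithful coordinate-by-coordinate.

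Concatenating the simulation with a hypothetical $o(\sqrt{n}/\eps^2)$-sample tester would contradict Lemma~\ref{lemma:bayesnets-sqrt-lbnd-uniformity}, which demands $\Omega(\sqrt{n'}/(\eps')^2) = \Omega(\sqrt{n}/\eps^2)$ samples for the XOR problem, completing the argument. The main obstacle I anticipate is the above ``conditional-uniformity'' step: one has to argue the reduction gives the right mixture distribution over input distributions (not merely the right marginal over samples), so that the tester's guarantees apply. As a fallback, should the reduction cause trouble, one can prove the lemma directly with Ingster's second-moment method: expand $1 + \chi^2(P_\mathcal{N}, P_\mathcal{Y}) = \mathbb{E}_{b, b'} \prod_i \bigl(1 + \tfrac{16 b_i b'_i \eps^2}{n}\bigr)^q$ using the product structure of $\nu_{b_i} \otimes \nu_{b'_i}$ relative to the uniform pair, then use independence and the Rademacher bound $\tfrac{1}{2}((1+c)^q + (1-c)^q) \le e^{q^2 c^2 / 2}$ to conclude that the chi-squared divergence, and hence the total variation distance between $q$-sample distributions under $\mathcal{Y}$ and $\mathcal{N}$, stays below a small constant whenever $q \le c'\sqrt{n}/\eps^2$.
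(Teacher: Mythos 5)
Your proof is correct and takes essentially the same approach as the paper: both reduce to Lemma~\ref{lemma:bayesnets-sqrt-lbnd-uniformity} via the pairwise-XOR bijection $(x,y)\mapsto(x\oplus y,y)$. The paper applies this bijection to the input distribution and restricts to the odd coordinates, while you run the equivalent simulation in the reverse direction (lifting a hard-family sample $y$ to a sample $x$ by injecting fresh uniform bits $r_i$), and your coordinate-wise verification that $(x_{2i-1},x_{2i})\sim\nu_{b_i}$ together with the i.i.d.\ structure across samples already resolves the mixture-over-instances concern you flagged.
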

\begin{proof}
    Without loss of generality, assume that $n$ is even. Consider the following bijection over $\{0,1\}^2$: $(x,y) \to (x \oplus y, y)$. For $b \in \{+1, 0, -1\}$, Let $\nu'_b$ be the distribution that draws $(x,y) \sim \nu_b$ and returns $(x \oplus y, y)$. Observe that $\nu'_b$ is identical to the product distribution $\Ber(\frac{1}{2} + \frac{2b\eps}{\sqrt{n}}) \times \Ber(\frac{1}{2})$. Let $f : \{0,1\}^n \to \{0,1\}^n$ be a bijection that applies the above mapping for every pair individually.

    Let $\mathcal N'$ be the distribution over inputs that draws $\mu \sim \mathcal N$, and then returns the distribution that draws $x \sim \mu$ and returns $f(x)$. Note also that the distribution that drawn $x \sim \pi_n$ and returns $f(x)$ is identical to $\pi_n$. Since $f$ is a bijection, this means that the number of samples needed to distinguish between the uniform distribution and $\mathcal N$ is exactly the same as the number of samples needed to distinguish between the uniform distribution and $\mathcal N'$.

    Observe that the form of a distribution resulting from $\mathcal N'$ is $\prod_{i=1}^{n/2} \left(\Ber\left(\frac{1}{2} + \frac{2b_i\eps}{\sqrt{n}}\right) \times \Ber\left(\frac{1}{2}\right)\right)$. The restriction of $\mathcal N'$ to odd indexes is identical to the construction of Lemma \ref{lemma:bayesnets-sqrt-lbnd-uniformity} (with parameters $\sqrt{2}\eps$ and $n/2$), which requires at least $\Omega(\sqrt{n} / \eps^2)$ many unconditional samples to distinguish between it and the uniform distribution. Hence, this is also a lower bound for the number of samples needed to distinguish between the uniform distribution and inputs drawn from $\mathcal N$.
\end{proof}

\begin{lemma} \label{lemma:product-lbnd-sqrt-simulation-of-a-pair}
    Let $b \in \{0,+1,-1\}$. Then for every subcube restriction $q \in \{0,1,*\}^2$, we can simulate a $q$-conditioned sample from $\nu_b$ by drawing a single unconditional sample from $\nu_b$, without having any knowledge about $b$.
\end{lemma}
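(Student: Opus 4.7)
The plan is to enumerate the nine possible restrictions $q \in \{0,1,*\}^2$ and, for each, exhibit a post-processing rule that converts a single unconditional sample $(x_1,x_2)\sim\nu_b$ into a sample distributed as $\nu_b|^q$, with the rule itself not depending on $b$.

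Three regimes arise. The trivial case $q=(*,*)$ is handled by returning the sample unchanged. Each of the four singleton cases $q=(c_1,c_2)\in\{0,1\}^2$ is handled by discarding the sample and returning the unique point $(c_1,c_2)$. The interesting cases are the four ``one-star'' rectangles. The preliminary calculation here is that, regardless of $b$, both marginals of $\nu_b$ are uniform on $\{0,1\}$ (since $\nu_b(00)+\nu_b(01)=\nu_b(00)+\nu_b(10)=\tfrac{1}{2}$), while the parity satisfies $\Pr_{\nu_b}[x_1\oplus x_2=0]=\nu_b(00)+\nu_b(11)=\tfrac{1}{2}+\tfrac{2b\eps}{\sqrt{n}}$. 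The rules I would use are: for $q=(c,*)$, output $(c,\ c\oplus x_1 \oplus x_2)$, and for $q=(*,c)$, output $(c\oplus x_1\oplus x_2,\ c)$.

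Verification is a short direct calculation that relies on the symmetry $\nu_b(00)=\nu_b(11)=\tfrac{1}{4}+\tfrac{b\eps}{\sqrt{n}}$ and $\nu_b(01)=\nu_b(10)=\tfrac{1}{4}-\tfrac{b\eps}{\sqrt{n}}$. For instance, in the case $q=(c,*)$, one checks that $\Pr[\text{output}=(c,c)]=\Pr[x_1\oplus x_2=0]=\tfrac{1}{2}+\tfrac{2b\eps}{\sqrt{n}}$, which equals $\nu_b(cc)/\Pr_{\nu_b}[x_1=c]$ for either choice of $c$, and similarly $\Pr[\text{output}=(c,1-c)]=\tfrac{1}{2}-\tfrac{2b\eps}{\sqrt{n}}$ matches the remaining conditional probability. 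The case $q=(*,c)$ is symmetric.

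There is no serious obstacle; the point to emphasize is the structural one that makes the construction work. Because each marginal of $\nu_b$ is uniform for every $b$, the coordinate that is ``fixed'' by the conditioning carries no information about $b$; all of the $b$-dependent information is carried by the parity $x_1\oplus x_2$. Hence, after we overwrite the fixed coordinate with $c$, we recover the correct conditional distribution of the free coordinate simply by XOR-ing $c$ with $x_1\oplus x_2$. The rule is manifestly oblivious to $b$, which is exactly what the lemma asks for.
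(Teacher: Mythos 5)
Your proposal is correct and takes essentially the same approach as the paper: the same three-way case split (unrestricted, fully determined, one-star) and the same parity-based post-processing rule, just written more compactly by parametrizing the fixed coordinate as $c$ rather than tabulating the four one-star cases.
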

\begin{proof}
    Let $q \in \{0,+1,*\}^2$ and let $b \in \{0,+1,-1\}$. We first draw an unconditional sample $x \sim \nu_b$.
    There are three kinds of subcube restrictions:
    \begin{itemize}
        \item Trivial: $q \in \{00, 01, 10, 11\}$. To simulate such a query, we simply return $q$, ignoring the unconditional sample we have.
        \item Unconditional: $q = **$. To simulate such a query, we simply return $x$.
        \item Single restriction: $q \in \{0*, 1*, *0, *1\}$. We denote the parity of $x$ by $p = x|_1 \oplus x|_2$, and construct the output as the concatenation of the two bits $\mathrm{out}_1$ and $\mathrm{out}_2$, based on the following table:
        \[\begin{array}{rllll}
             q & 0* & 1* & *0 & *1 \\
             \mathrm{out}_1 & 0 & 1 & p & 1\oplus p \\ 
             \mathrm{out}_2  & p & 1\oplus p & 0 & 1 \\
        \end{array}\]
    \end{itemize}
    The correctness of the output in the trivial case and the unconditional case is trivial. We prove the correctness of the single-restriction case only for $q = 0*$, since the other cases are analogous.
    \[\Pr_{\nu_b}[00 | 0*]
    = \frac{\Pr_{\nu_b}[00]}{\Pr_{\nu_b}[0*]}
    = \frac{\frac{1}{4} + \frac{b_i \eps}{\sqrt{n}}}{\frac{1}{4} + \frac{b_i \eps}{\sqrt{n}} + \frac{1}{4} - \frac{b_i \eps}{\sqrt{n}}}
    = \frac{1}{2} + 2 \frac{b_i \eps}{\sqrt{n}}
    = \Pr_{\nu_b}[00 \vee 11]
    = \Pr_{x\sim \nu_b}[x|_1 \oplus x|_2 = 0] \qedhere\]
\end{proof}
At this point we can prove Theorem \ref{th:lbnd-independence}.
\begin{proof}[Proof of Theorem \ref{th:lbnd-independence}]
    By Lemma \ref{lemma:sqrt-N-far-from-product}, $\mathcal N$ draws an input distribution that is $\Omega(\eps)$-far from any product distribution.
    Observe that since the structure of the pairs is known in advance and since they are independent, we can simulate every $q$-subcube-query algorithm using a $q$-unconditional-sample algorithm: for each query we draw a single sample and then use the simulation procedure of Lemma \ref{lemma:product-lbnd-sqrt-simulation-of-a-pair} for every pair in itself.
    
    Since an unconditional test requires $\Omega(\sqrt{n} / \eps^2)$ queries to distinguish inputs drawn from $\mathcal{N}$ from the uniform distribution by Lemma \ref{lemma:lbnd-sqrt-N-hard}, and every subcube query to the uniform distribution or an input drawn from $\mathcal{N}$ can be simulated using a single unconditional query, the lower bound holds for subcube algorithms as well.
\end{proof}

\subsection{Technical proofs for the \propname{product} lower bound}
\label{sec:lbnd-product-main-path:subsec:technical-proofs-product}

Recall that we denote the uniform distribution over $\{0,1\}^n$ by $\pi_n$.

Before we prove Lemma \ref{lemma:sqrt-N-far-from-product}, we need the following technical lemmas.

\begin{lemma} \label{lemma:binom-bound}
    For a string $y \in \{0,1\}^n$, let $C_1(y)$ be the number of $1$s in $y$. If $n \ge 70$, then $\Pr_{y\sim \pi_n}[C_1(y) > \frac{1}{2}n + \frac{1}{4}\sqrt{n}] \ge \frac{1}{4}$.
\end{lemma}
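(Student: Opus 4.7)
The random variable $C_1(y)$ for $y \sim \pi_n$ is the number of $1$s in a uniformly random $n$-bit string, hence it has distribution $\mathrm{Bin}(n, 1/2)$. My plan is a direct anti-concentration argument: setting $X = C_1(y)$, I decompose
\[\Pr[X > \tfrac12 n + \tfrac14\sqrt{n}] = \Pr[X > \tfrac12 n] - \Pr[\tfrac12 n < X \le \tfrac12 n + \tfrac14\sqrt{n}],\]
and bound the two pieces using the symmetry of the binomial distribution around $n/2$ together with the standard Stirling estimate $\binom{n}{\lfloor n/2 \rfloor}/2^n \le \sqrt{2/(\pi n)}$. This latter bound is valid in both parities: the odd case $n = 2m+1$ reduces to the even case via the identity $\binom{2m+1}{m}/2^{2m+1} = \binom{2m+2}{m+1}/2^{2m+2}$.

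The bookkeeping splits naturally by the parity of $n$. For even $n$, symmetry gives $\Pr[X > n/2] \ge \tfrac12(1 - \sqrt{2/(\pi n)})$, and the interval $(\tfrac12 n, \tfrac12 n + \tfrac14\sqrt{n}]$ contains exactly $\lfloor \tfrac14\sqrt{n} \rfloor \le \tfrac14\sqrt{n}$ integers, each carrying probability at most the central maximum $\sqrt{2/(\pi n)}$. For odd $n$, $n/2$ is a half-integer, so $\Pr[X > n/2]$ equals $1/2$ exactly, while the interval now holds at most $\tfrac14\sqrt{n} + \tfrac12$ integers. In both cases the two corrections combine to give the same uniform bound
\[\Pr[X > \tfrac12 n + \tfrac14\sqrt{n}] \ge \tfrac12 - \tfrac14\sqrt{2/\pi} - \tfrac12\sqrt{2/(\pi n)}.\]

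It remains to verify that this right-hand side exceeds $\tfrac14$ whenever $n \ge 70$. Since $\tfrac12 - \tfrac14\sqrt{2/\pi} \approx 0.3006$, the requirement $\tfrac12\sqrt{2/(\pi n)} \le 0.0506$ is equivalent to $n \ge 8/(\pi(1 - \sqrt{2/\pi})^2)$, which evaluates to about $62.4$; so any $n \ge 63$ in fact suffices, leaving room for the stated hypothesis $n \ge 70$ (for which the bound is roughly $0.253$). The main obstacle I expect is just the tightness of the constants: at the boundary the margin is only a few thousandths, so any coarser estimate on the counting of integers or on the individual probability masses would fail exactly at the threshold, making the parity-sensitive case analysis essential rather than optional.
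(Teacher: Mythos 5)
Your argument is correct and, after simplification, yields precisely the same lower bound $\tfrac12 - \tfrac14\sqrt{2/\pi} - \tfrac12\sqrt{2/(\pi n)}$ as the paper's proof, which bounds the two-sided band $\Pr[|C_1(y) - n/2| \le \tfrac14\sqrt n]$ by $(\tfrac12\sqrt n + 1)\sqrt{2/(\pi n)}$ and then halves the complement by symmetry of $\mathrm{Bin}(n,1/2)$ --- an arithmetically equivalent reorganization of your one-sided decomposition (with your explicit parity split absorbed into the paper's ``$+1$'' in the integer count). Both proofs rest on the same two ingredients, namely the central coefficient bound $\binom{n}{\lfloor n/2\rfloor} \le 2^n\sqrt{2/(\pi n)}$ and the count of $O(\sqrt n)$ lattice points in a band of half-width $\tfrac14\sqrt n$ around $n/2$.
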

\begin{proof}
    We use the well-known bound $\binom{n}{\floor{n/2}} \le 2^n \cdot \sqrt{\frac{2}{\pi n}}$ (for all $n \ge 1$) to obtain for $n\ge 70$:
    
    \begin{eqnarray*}
        \Pr_{y\sim \pi_n}\left[\abs{C_1(y) - \frac{1}{2}n} \le \frac{1}{4}\sqrt{n}\right]
        &=& 
        \sum_{k = \ceil{\frac{1}{2}n - \frac{1}{4} \sqrt{n}}}^{\floor{\frac{1}{2}n + \frac{1}{4} \sqrt{n}}} \Pr_{y\sim \pi_n}\left[C_1(y) = k\right] \\
        \text{[Since $\textstyle{\binom{n}{k}\le \binom{n}{\floor{n/2}}}$]}
        &\le& \left(2 \cdot \frac{1}{4}\sqrt{n} + 1\right) 2^{-n} \binom{n}{\floor{n/2}} \\
        &\le& \left(\frac{1}{2}\sqrt{n} + 1\right) \frac{\sqrt{2}}{\sqrt{\pi n}}
        \le \frac{1}{2}
    \end{eqnarray*}

    By symmetry reasons, if $n\ge 70$ then:
    \begin{eqnarray*}
        \Pr_{y\sim \pi_n}\left[C_1(y) > \frac{1}{2}n + \frac{1}{4}\sqrt{n}\right]
        &=& \frac{1}{2}\Pr_{y\sim \pi_n}\left[\abs{C_1(y) - \frac{1}{2}n} > \frac{1}{4}\sqrt{n}\right] \\
        &=& \frac{1}{2}\left(1 - \Pr_{y\sim \pi_n}\left[\abs{C_1(y) - \frac{1}{2}n} \le \frac{1}{4}\sqrt{n}\right]\right)
        \ge \frac{1}{4}
    \end{eqnarray*}
\end{proof}

\begin{lemma} \label{lemma:marginal-ratio-to-dtv}
    Let $0 < \delta < \frac{1}{4 \sqrt{n}}$ and $\tau$ be a product distribution over $\{0,1\}^n$. For sufficiently large $n$, if $\abs{\Pr_\tau[x_i = 1] - \frac{1}{2}} > \delta$ for every $1 \le i \le n$, then the distance of $\tau$ from the uniform distribution over $\{0,1\}^n$ is at least $\frac{1}{16} \delta \sqrt{n}$.
\end{lemma}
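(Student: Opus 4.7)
My plan is to use the sum statistic $T(x) := C_1(x) = \sum_i x_i$ as a test, and lower-bound the gap between the tail probabilities of $T$ under $\tau$ and under $\pi_n$.

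As a first step I would reduce to a one-sided regime: for every coordinate $i$ with $\Pr_\tau[x_i=1] < \tfrac12$, replace $x_i$ by $1-x_i$. This is a bijection of $\{0,1\}^n$ that preserves $\pi_n$, the product-distribution structure of $\tau$, the value $\dtv(\tau,\pi_n)$, and each $|\Pr_\tau[x_i=1]-\tfrac12|$. Hence one may assume $p_i := \Pr_\tau[x_i=1] > \tfrac12+\delta$ for every $i$, and set $r_i := p_i-\tfrac12 > \delta$. Under $\pi_n$, $T_u := T \sim \mathrm{Bin}(n,\tfrac12)$; under $\tau$, $T_\tau := T$ is a Poisson-binomial with mean $\mu_\tau = \sum p_i \ge \tfrac n2 + n\delta$ and variance at most $n/4$. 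By the data processing inequality, $\dtv(\tau,\pi_n) \ge \Pr_\tau[E] - \Pr_{\pi_n}[E]$ for any event $E$ that depends only on $T$.

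I would take $E = \{T > \tfrac n2 + \tfrac{\sqrt n}{4}\}$, for which Lemma \ref{lemma:binom-bound} gives $\Pr_{\pi_n}[E] \in [\tfrac14, \tfrac12]$ (the upper bound by the symmetry of $\pi_n$). To bound $\Pr_\tau[E] - \Pr_{\pi_n}[E]$ from below I would set up a monotone coupling: independently over $i$, draw $Z_i \sim \Ber(\tfrac12)$ and $V_i \sim \Ber(2r_i)$, and set $Y_i := Z_i + (1-Z_i)V_i \sim \Ber(p_i)$. Then $T_\tau = T_u + D$ with $T_u = \sum Z_i$ and $D = \sum(1-Z_i)V_i \ge 0$, and
\[
\Pr_\tau[E] - \Pr_{\pi_n}[E] \;=\; \Pr\!\bigl[T_u \le \tfrac n2 + \tfrac{\sqrt n}{4} < T_u + D\bigr].
\]
Direct computation gives $\E[D] = \sum r_i \ge n\delta$; moreover, since the $Z_i$'s are exchangeable, $\E[D \mid T_u = k] = \tfrac{2(n-k)}{n}\sum r_i \ge 2(n-k)\delta$, which is at least $n\delta$ for all $k \le n/2$.

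The main obstacle is converting these mean-shift and density estimates into the pointwise lower bound $\delta\sqrt n/16$. The key analytic input I would invoke is the matching lower bound on the central binomial coefficient, $\binom{n}{\lfloor n/2\rfloor}/2^n \ge c/\sqrt n$ for an absolute constant $c$ (the reverse of the Stirling estimate underlying Lemma \ref{lemma:binom-bound}), which extends to a pointwise lower bound $\Pr[T_u = k] \ge c'/\sqrt n$ across a window of width $\Theta(\sqrt n)$ around $n/2$. Combining this density with $\E[D \mid T_u = k] \ge n\delta$ through a second-moment / Chebyshev analysis on $D \mid T_u = k$ (using the bound $\Var[D\mid T_u=k] = O(n)$), the event in the display captures $\Omega(n\delta)$ units of $T_u$-probability at density $\Theta(1/\sqrt n)$, yielding the required $\Omega(\delta\sqrt n)$ bound. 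The explicit constant $\tfrac{1}{16}$ follows by carefully tracking the $\tfrac14$ factor from Lemma \ref{lemma:binom-bound} and the constants in the Chebyshev step.
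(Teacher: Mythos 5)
Your overall plan --- negate coordinates to force $p_i > \tfrac12 + \delta$, test against the threshold event $E=\{T>\tfrac n2+\tfrac{\sqrt n}{4}\}$, and apply Lemma~\ref{lemma:binom-bound} --- matches the paper's opening moves. After that you diverge: the paper replaces $\tau$ by the i.i.d.\ product $\tau' = \Ber(\tfrac{1+\delta}{2})^{\otimes n}$ (observing that this only decreases the distance to $\pi_n$) and then lower-bounds the pointwise likelihood ratio $\Pr_{\tau'}[y]/\Pr_{\pi_n}[y] \ge 1+\tfrac14\delta\sqrt n$ for every $y\in E$, so that $\Pr_{\tau'}[E]-\Pr_{\pi_n}[E]\ge \tfrac14\delta\sqrt n\cdot\Pr_{\pi_n}[E]\ge \tfrac1{16}\delta\sqrt n$. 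You avoid that reduction entirely via a monotone coupling $T_\tau=T_u+D$ and instead try to bound the crossing probability $\Pr[T_u\le t<T_u+D]$. That is a genuinely different route, and it sidesteps the paper's unargued monotonicity step.

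However, the step you yourself flag as the main obstacle is indeed a gap, and the tool you propose does not close it. You cite $\Var[D\mid T_u=k]=O(n)$ and then Chebyshev. But the lemma lives precisely in the regime $n\delta < \sqrt n/4$ (forced by $\delta<1/(4\sqrt n)$), so the conditional standard deviation $\Theta(\sqrt n)$ you are invoking swamps the conditional mean $\approx n\delta$; Chebyshev around the mean gives no concentration and cannot even rule out $\Pr[D=0\mid T_u=k]\to 1$. To make your approach work you would need the sharper bound $\Var[D\mid T_u=k]=O\bigl(\E[D\mid T_u=k]\bigr)$ (each $V_i\sim\Ber(2r_i)$ contributes variance at most its mean, and the sampling-without-replacement contribution from which indices have $Z_i=0$ is likewise controlled by $\sum(2r_i)^2\le\sum 2r_i$), followed not by Chebyshev but by a Paley--Zygmund / truncated-first-moment argument showing $\E[\min(D,\lfloor\sqrt n/4\rfloor)\mid T_u=k]=\Omega(n\delta)$, split into the case $\sum r_i=O(1)$ (where essentially only $\Pr[D\ge1]$ matters and the second-moment method gives $\Pr[D\ge1]\gtrsim\sum r_i$) and the case $\sum r_i$ large (where Chebyshev does apply since mean then dominates standard deviation). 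As written, your sketch asserts the conclusion of this analysis without providing it, and with the stated variance bound the asserted step is false.
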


\begin{proof}
    Without loss of generality, we can assume that $\Pr_\tau[x_i = 1] \ge \frac{1}{2}$ for every $1 \le i \le n$. Otherwise, we can negate the ``wrong'' bits while preserving the distance from the uniform distribution. Based on this assumption, we have a product distribution whose probability to draw $1$ at any individual index is at least $\frac{1 + \delta}{2}$.

    Let $\tau'$ be the product distribution whose probability to draw $1$ at any index is exactly $\frac{1+\delta}{2}$. Observe that the distance of $\tau'$ from the uniform distribution is a lower bound of the distance of $\tau$ from the uniform distribution.

    Let $y \in \{0,1\}^n$ be a string, and let $C_1(y)$ (respectively $C_0(y)$) be the number of $1$s (respectively $0$s) in $y$. If $C_1(y) \ge \frac{1}{2}n + \frac{1}{4}\sqrt{n}$, then:
    \begin{eqnarray*}
        \frac{\Pr_{\tau'}[y]}{\Pr_{\pi_n}[y]}
        &=& (1 + \delta)^{C_1(y)} \cdot (1 - \delta)^{n - C_1(y)} \\
        &=& (1 + \delta)^{C_1(y) - C_0(y)} \cdot ((1 + \delta)(1 - \delta))^{C_0(y)} \\
        &\ge& (1 + \delta)^{\sqrt{n} / 2} \cdot ((1 + \delta)(1 - \delta))^{\frac{1}{2}n - \frac{1}{4}\sqrt{n}} \\
        &=& (1 + \delta)^{\sqrt{n} / 2} \cdot (1 - \delta^2)^{\frac{1}{2}n - \frac{1}{4}\sqrt{n}} \\
        &\ge& (1 + \delta)^{\sqrt{n} / 2} \cdot (1 - \delta^2)^{\frac{1}{2}n} \\
        \text{[$(1+a)^b \ge 1+ab$ for $b \ge 1$, $|a|<1$]}&\ge& \left(1 + \frac{1}{2}\delta\sqrt{n}\right) \cdot  \left(1 - \frac{1}{2}\delta^2 n\right) \\
        \text{[$\textstyle \delta < \frac{1}{4\sqrt{n}}$]} &\ge& 1 + \frac{1}{4} \delta\sqrt{n}
    \end{eqnarray*}

    In particular,
    \[ \frac{\Pr_{y \sim \tau'}\left[C_1(y) > \frac{1}{2}n + \frac{1}{4}\sqrt{n}\right]}{\Pr_{y \sim \pi_n}\left[C_1(y) > \frac{1}{2}n + \frac{1}{4}\sqrt{n}\right]} \ge 1+\frac{1}{4}\delta\sqrt{n} \]

    Since $\Pr_{y \sim \pi_n}\left[C_1(y) > \frac{1}{2}n + \frac{1}{4}\sqrt{n}\right] \ge \frac{1}{4}$ by Lemma \ref{lemma:binom-bound}, we obtain that $\dtv(\tau', \pi_n) \ge \frac{1}{4}\delta\sqrt{n} \cdot \frac{1}{4} = \frac{1}{16}\delta\sqrt{n}$.
\end{proof}

We recall and prove Lemma \ref{lemma:sqrt-N-far-from-product}:

\lemmaZsqrtZNZfarZfromZproduct*

\begin{proof}
    Without loss of generality, we assume that $n$ is even. Let $b_1,\ldots,b_{n/2} \in \{+1, -1\}$ be the parameters of the construction, that is, the drawn input is $\mu = \prod_{i=1}^{n/2} \nu_{b_i}$. Recall that for $b \in \{+1, -1\}$, $\Pr_{\nu_b}[x_1 = 1] = \Pr_{\nu_b}[x_2 = 1] = \frac{1}{2}$ and $\Pr_{\nu_b}[x_1 \oplus x_2 = 1] = \frac{1}{2} - \frac{2b \eps}{\sqrt{n}}$. We assume that $n$ is sufficiently large so that $\floor{n/6}$ satisfies the constraints of Lemma \ref{lemma:marginal-ratio-to-dtv}.

    For a given product distribution $\tau$, for every $1 \le i \le n/2$, let:
    \begin{eqnarray*}
        \delta_{i,0} &=& \Pr_\tau[x_{2i} = 1] - \frac{1}{2} \\
        \delta_{i,1} &=& \Pr_\tau[x_{2i-1} = 1] - \frac{1}{2} \\
        \delta_{i,2} &=& \Pr_\tau[x_{2i} \oplus x_{2i-1} = 1] - \frac{1}{2}
    \end{eqnarray*}

    Let $I_0$ be the set of indexes for which $\abs{\delta_{i,0}} > \frac{\eps}{10\sqrt{n}}$. Let $I_1$ be defined analogously for $\delta_{i,1}$, and let $I_2 = [n/2] \setminus (I_0 \cup I_1)$ be the set of all other indexes. Since $|I_0| + |I_1| + |I_2| = n/2$, at least one of them contains at least $\frac{1}{6}n$ elements.

    \paragraph{Case I. $|I_0| \ge \frac{1}{6}n$} Let $I = \{ 2i : i \in I_0\}$. Observe that $\mu|_I$ is uniform over $\{0,1\}^{n/2}$, since $\mu$ is the product of $n/2$ independent distributions over pairs whose marginals are $\frac{1}{2}$, and every pair contributes exactly one index. According to Lemma \ref{lemma:marginal-ratio-to-dtv},
    \[\dtv(\tau,\mu)
    \ge \dtv(\tau|_I, \mu|_I)
    \ge \frac{1}{16} \cdot \frac{\eps}{10 \sqrt{n}} \cdot \sqrt{\frac{1}{6}n}
    = \frac{1}{160\sqrt{6}}\eps > \frac{1}{400}\eps\]

    \paragraph{Case II. $|I_1| \ge \frac{1}{6}n$} Completely analogous to Case I. Again $\dtv(\tau,\mu)>\frac{1}{400}\eps$.
    
    \paragraph{Case III. $|I_2| \ge \frac{1}{6}n$} For every $i \in I_2$, $|\delta_{i,0}|, |\delta_{i,1}| \le \frac{\eps}{10\sqrt{n}}$. Hence,
    \begin{eqnarray*}
        |\delta_{i,2}|
        = \abs{\Pr_\tau[x_{2i-1} \oplus x_{2i} = 1] - \frac{1}{2}}
        &=& \abs{\left(\frac{1}{2} - \delta_{i,0}\right)\left(\frac{1}{2} + \delta_{i,1}\right) + \left(\frac{1}{2} + \delta_{i,0}\right)\left(\frac{1}{2} - \delta_{i,1}\right) - \frac{1}{2}} \\
        &=& 2|\delta_{i,0}| |\delta_{i,1}|
        \le \frac{\eps^2}{50n}
    \end{eqnarray*}
    
    Let $I = \{ 2i : i \in I_2 \} \cup \{ 2i - 1 : i \in I_2 \}$
    Let $f : \{0,1\}^I \to \{0,1\}^{|I_2|}$ be the function that maps every pair in $I_2$ to its parity bit. In other words, for every $i \in I_2$, the bits $x_{2i-1}, x_{2i}$ are mapped to a single bit $x_{2i-1} \oplus x_{2i}$. Let $\mu'$ (respectively $\tau'$) be the distribution over $\{0,1\}^{|I_2|}$ that draws a sample $x \sim \mu$ (respectively $x \sim \tau$) and returns $f(x)$.
    Observe that both $\mu'$ and $\tau'$ are product distributions over $\{0,1\}^{|I_2|}$, since the pairs are independent and every pair is mapped into a single bit.

    Note that $\dtv(\tau', \pi_{|I_2|}) \le \Delta_\mathrm{TV}(\tau', \pi_{|I_2|}) = \sum_{i=1}^{|I_2|} \dtv\left(\tau'|_i, \Ber(1/2)\right) \le |I_2| \cdot \frac{\eps^2}{50n} \le \frac{1}{50}\eps^2$.

    In $\mu'$, by definition of the $\nu_b$s, all marginals have the form $\frac{1}{2} \pm \frac{\eps}{\sqrt{n}}$, hence by Lemma \ref{lemma:marginal-ratio-to-dtv}, 
    \[\dtv(\mu', \pi_{|I_2|})
    \ge \frac{1}{16} \cdot \frac{\eps}{\sqrt{n}} \cdot \sqrt{\frac{1}{6}n} 
    = \frac{\eps}{16\sqrt{6}}
    \ge \frac{1}{40}\eps\]

    By a data processing inequality and the triangle inequality,
    \[\dtv(\mu, \tau) \ge \dtv(\mu', \tau') \ge \dtv(\mu', \pi_{|I_2|}) - \dtv(\tau', \pi_{|I_2|}) \ge \frac{1}{40}\eps - \frac{1}{50}\eps^2 > \frac{1}{400}\eps\qedhere\]
\end{proof}


\section{Extending the \propname{equivalence} test to general alphabets}
\label{sec:closeness-omegas}

We extend the definitions of the prefix oracle to non-binary settings.

\begin{definition}[Subcube oracle access in non-binary strings]
    Let $\mu$ be an unknown distribution over $\prod_{i=1}^n \Omega_n$, where $\Omega_1,\ldots,\Omega_n$ are all finite. The \emph{subcube oracle} has as input a tuple $(A_1,\ldots,A_n)$ where $A_i \subseteq \Omega_i$ for every $1 \le i \le n$. The output distributes as $\mu|^{\prod_{i=1}^n A_i}$. For technical reasons, if $\Pr_\mu[\prod_{i=1}^n A_i] = 0$, then the oracle indicates an error. Note that the answers of the oracle are fully independent of the answers that were given to previous queries.
\end{definition}

In the corresponding definition for a prefix oracle, we still demand that until the ``break-off index'' $i$ all conditions force single outcomes from the sets $\Omega_j$, while after the break-off index there are no restrictions at all. However, at index $i$ we allow conditions to any subset of $\Omega_i$ to take place. There is no such distinction in the binary case, where $|\Omega_i|=2$ and hence all non-trivial conditions are to a single outcome.

\begin{definition}[Prefix oracle access in non-binary settings]
    Let $\mu$ be an unknown distribution over $\prod_{i=1}^n \Omega_n$, where $\Omega_1,\ldots,\Omega_n$ are all finite. The input of the \emph{prefix oracle} consists of an index $1 \le i \le n$, which we refer to as the \emph{index of the prefix}, elements $a_j \in \Omega_j$ for every $1 \le j \le i-1$, and a condition $A \subseteq \Omega_i$. The output of the oracle distributes like $\mu|^{\{x : x_i \in A \wedge x_1=a_1,\ldots, x_{i-1} = a_{i-1}\}}$.
\end{definition}

\begin{definition}[Marginal subcube oracle access in non-binary settings]
        Let $\mu$ be an unknown distribution over $\prod_{i=1}^n \Omega_n$, where $\Omega_1,\ldots,\Omega_n$ are all finite. The \emph{marginal subcube oracle} has as input an index $i$ and a tuple $(A_1,\ldots,A_n)$ where $A_j \subseteq \Omega_j$ for every $1 \le j \le n$. The output distributes as $\mu|_i^{\prod_{j=1}^n A_j}$. For technical reasons, if $\Pr_\mu[\prod_{i=1}^n A_i] = 0$, then the oracle indicates an error.
\end{definition}

\begin{definition}[Marginal prefix oracle access in non-binary settings]
    Let $\mu$ be an unknown distribution over $\prod_{i=1}^n \Omega_n$, where $\Omega_1,\ldots,\Omega_n$ are all finite. The input of the \emph{marginal prefix oracle} consists of an index $1 \le i \le n$, which we refer to as the \emph{index of the prefix}, elements $a_j \in \Omega_j$ for every $1 \le j \le i-1$, and a condition $A \subseteq \Omega_i$. The output of the oracle distributes like $\mu|_i^{\{x : x_i \in A \wedge x_1=a_1,\ldots,x_{i-1}=a_{i-1}\}}$.
\end{definition}

Based on these definitions, we state Theorem \ref{th:closeness-omegas}:
\thZclosenessZomegas

Before we prove Theorem \ref{th:closeness-omegas}, we need the following.

\begin{lemma}[Binary form of a composite distribution] \label{lemma:simulation-of-binary-form}
    Let $\mu$ be a distribution over $\prod_{i=1}^n \Omega_i$, where $\Omega_1,\ldots,\Omega_n$ are non-empty finite sets. There exists a distribution $\mu^*$ over $\{0,1\}^{\sum_{i=1}^n \ceil{\log_2 |\Omega_i|}}$ that is equivalent to $\mu$ up to relabeling, for which every subcube (respectively prefix) query to $\mu^*$ can be simulated using a single subcube (respectively prefix) query to $\mu$, and every marginal subcube (respectively prefix) query to $\mu^*$ can be simulated using a single marginal subcube (respectively prefix) query to $\mu$.
\end{lemma}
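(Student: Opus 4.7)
The plan is to fix, for each $1 \le i \le n$, an arbitrary injective encoding $\phi_i : \Omega_i \hookrightarrow \{0,1\}^{\ell_i}$ with $\ell_i = \ceil{\log_2 |\Omega_i|}$, and to define $\mu^*$ as the pushforward of $\mu$ under the coordinate-wise concatenation $\phi(x_1,\ldots,x_n) = \phi_1(x_1) \cdots \phi_n(x_n)$: that is, $\mu^*(\phi(x)) = \mu(x)$ for $x \in \prod_i \Omega_i$ and $\mu^*(y) = 0$ for $y$ outside the image of $\phi$. Since $\phi$ is a bijection onto its image, the equivalence of $\mu^*$ with $\mu$ up to relabeling is immediate.

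The key structural feature of $\phi$ is that the $N = \sum_i \ell_i$ output bits split into $n$ consecutive blocks, with the $i$-th block of length $\ell_i$ carrying precisely the encoding $\phi_i(x_i)$. Every simulation will exploit this block structure to translate a binary query to $\mu^*$ into a single non-binary query of the same flavor to $\mu$, followed by encoding the returned element through $\phi$.

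For a subcube query $(B_1,\ldots,B_N)$ to $\mu^*$, the plan is to split the bits into the $n$ blocks, form $C_i = \prod_{k=1}^{\ell_i} B_{s_i+k} \subseteq \{0,1\}^{\ell_i}$ per block (where $s_i = \sum_{m<i} \ell_m$), and pull back to $A_i = \phi_i^{-1}(C_i) \subseteq \Omega_i$. A single subcube query to $\mu$ with $(A_1,\ldots,A_n)$, returning a sample $x$ that we emit as $\phi(x)$, realizes $\mu^*|^{\prod_j B_j}$, because $\phi$ establishes a bijection between $\prod_i A_i$ and the $\mu^*$-support of $\prod_j B_j$ and pushforward commutes with conditioning on an event. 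For a prefix query $(k,w)$ to $\mu^*$, the plan is to locate the block $i$ with $s_i \le k < s_{i+1}$; then the first $i-1$ blocks of $w$ specify elements $a_1,\ldots,a_{i-1}$ (and if one of these blocks lies outside the image of some $\phi_j$ then the binary conditioning event already has $\mu^*$-probability zero and the oracle may signal an error), while the remaining $k - s_i$ bits of $w$ inside block $i$ determine a subset $A = \{a \in \Omega_i : \phi_i(a) \text{ starts with those bits}\} \subseteq \Omega_i$ (with $A = \Omega_i$ when $k = s_i$). The non-binary prefix oracle takes exactly such an input $(i, a_1,\ldots,a_{i-1}, A)$, so a single query followed by encoding through $\phi$ yields a sample distributed as $\mu^*|^{x_{[k]} = w}$. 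Marginal variants are handled identically: perform the corresponding non-binary marginal query, which returns an element of $\Omega_i$, and output the bit of its encoding at the requested position within block $i$.

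The only step requiring more than routine bookkeeping is the prefix case, where one must verify that the block carrying the break-off index is captured correctly by the subset parameter $A \subseteq \Omega_i$ of the non-binary prefix oracle; this was in fact the motivation for including $A$ in the oracle's definition, so the correspondence goes through and the remaining verifications reduce to elementary pushforward/conditioning identities.
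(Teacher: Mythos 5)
Your proof takes essentially the same route as the paper's: pick arbitrary injective encodings $\phi_i : \Omega_i \hookrightarrow \{0,1\}^{\ceil{\log_2|\Omega_i|}}$, define $\mu^*$ as the pushforward of $\mu$ under the block-wise concatenation $\phi$, and translate binary subcube/prefix conditions into per-coordinate pullback sets $A_i \subseteq \Omega_i$, observing that the block structure maps prefix conditions on $\mu^*$ to prefix conditions (with a subset parameter $A$ at the break-off block) on $\mu$, and that marginal queries are handled by decoding and emitting the requested bit. One small point in your favor: the paper defines $A_i$ as the \emph{forward image} of the consistent binary strings under a decoding map $g_i$ that is extended arbitrarily off the image of $f_i$, which as written can introduce spurious elements of $\Omega_i$ into $A_i$ whose encoding does not actually satisfy the binary condition; your $A_i = \phi_i^{-1}(C_i)$ is the true preimage and sidesteps this. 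Your explicit walk-through of the prefix case also spells out what the paper compresses into a single sentence, but the underlying argument is the same, and it is correct.
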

\begin{proof}
    For every $1 \le i \le n$, let $f_i : \Omega_i \to \{0,1\}^{\ceil{\log_2 |\Omega_i|}}$ be an arbitrary injective function from $\Omega_i$ to $\{0,1\}^{\ceil{\log_2 |\Omega_i|}}$. Let $f : \prod_{i=1}^n \Omega_i \to \{0,1\}^{\sum_{i=1}^n \ceil{\log_2 |\Omega_i|}}$ be the concatenation of these mappings. More precisely, $f((x_1,\ldots,x_n)) = \mathrm{concatenate}(f_1(x_1),\ldots,f_n(x_n))$. Let $N = \sum_{i=1}^n \ceil{\log_2 |\Omega_i|}$.
    
    For every $1 \le i \le n$, we define a projection function $g_i : \{0,1\}^{\sum_{j=1}^n \ceil{\log_2 |\Omega_i|}} \to \Omega_i$ by
    \[g_i(x) = (f_i)^{-1}\left( x|_{\{\sum_{j=1}^{i-1} \ceil{\log_2 |\Omega_j|} + 1,\ldots,\sum_{j=1}^{i} \ceil{\log_2 |\Omega_j|} \}}\right)\]
    where $g_i(x)$ is defined arbitrarily if the corresponding binary string is not in $f_i$'s image, as this will be a zero-probability event. Observe that for every $(x_1,\ldots,x_n) \in \prod_{i=1}^n \Omega_i$ and for every $1 \le i \le n$, $x_i = g_i(f(x_1,\ldots,x_n))$.

    Let $\mu^*$ be the distribution over $\{0,1\}^{\sum_{i=1}^n \ceil{\log_2 |\Omega_i|}}$ that draws $x \sim \mu$ and returns $f(x)$. Since $f$ is an injective function, $\mu^*$ is equivalent to $\mu$ up to relabeling.

    For simulating subcube queries consider some $I^* \subseteq [N]$ and $w^* \in \{0,1\}^{I^*}$. For every $1 \le i \le n$ and $x^* \in \{0,1\}^N$, let $x_i = g_i(x^*)$ (that is, we decompose $x^*$ using $x^* = f(x_1,\ldots,x_n)$). Also, for every $1 \le i \le n$, let $A_i(I^*,w^*) = \left\{g_i(s^*) : s^* \in \{0,1\}^N \wedge s^*|_{I^*} = w^* \right\}$. Based on this composition, we obtain:
    \[ \left\{ x^* \in \{0,1\}^N : x^*|_{I^*} = w^* \right\} = \left\{ f(x_1,\ldots,x_n) : \bigwedge_{i=1}^n \left(x_i \in A_i(I^*, w^*) \right) \right\} \]
    The last expression is the $f$-image of all elements in the $\mu$-subcube condition $\prod_{i=1}^n A_i(I^*,w^*)$. Hence, every subcube query of $\mu^*$ can be simulated using a single subcube query to $\mu$.

    Observe that this construction preserves prefix queries. That is, if a subcube query to $\mu^*$ is a prefix query, then the simulated subcube query to $\mu$ is a prefix query as well. Note that this argument holds for marginal queries as well, since we can extract the relevant bit of the sampled coordinate.
\end{proof}

Theorem \ref{th:closeness-omegas} now follows.

\begin{proof}[Proof of Theorem \ref{th:closeness-omegas}]
    Let $\mu$ and $\tau$ be two distributions over $\prod_{i=1}^n \Omega_i$, where $\Omega_1,\ldots,\Omega_n$ are all finite. We use Lemma \ref{lemma:simulation-of-binary-form} to define two distributions $\mu^*$, $\tau^*$ that are identical to $\mu$, $\tau$ respectively up to relabeling (which is the same in both constructions). Based on this lemma:
    \begin{itemize}
        \item $\dtv(\tau^*, \mu^*) = \dtv(\tau, \mu)$, since $\mu^*$, $\tau^*$ are the same as $\mu$, $\tau$ up to relabeling (which is the same for both constructions $\mu \to \mu^*$ and $\tau \to \tau^*$).
        \item Every prefix query to $\tau^*$ can be simulated using a single prefix query to $\tau$.
        \item Every marginal prefix query to $\mu^*$ can be simulated using a single marginal prefix query to $\mu$.
    \end{itemize}
    Hence, we can distinguish between $\tau = \mu$ and $\dtv(\tau, \mu) > \eps$ using Theorem $\ref{th:closeness-linearithmic}$ with the input $(\mu^*, \tau^*, \eps)$ by simulating every query to $\tau^*$ or $\mu^*$ through a single query to the corresponding input distribution $\tau$ or $\mu$.
\end{proof}

\section{Upper bound for \propname{product}}
\label{sec:ubnd-product}

We reduce a test for \propname{product} to a test for \propname{equivalence}, based on the following key observation:

\begin{observation}
    Let $\mu$ be a distribution over $\prod_{i=1}^n \Omega_i$. If $\mu$ is $\eps$-far from any product distribution, then in particular it is $\eps$-far from the product of its marginals, $\prod_{i=1}^n \mu|_i$.
\end{observation}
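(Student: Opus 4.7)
The plan is essentially to unpack definitions, since this observation is immediate. First I would note that the distribution $\prod_{i=1}^n \mu|_i$ is, by construction, a product distribution: each factor $\mu|_i$ is a well-defined distribution over $\Omega_i$ (the $i$th marginal of $\mu$), so the tuple $(\mu|_1,\ldots,\mu|_n)$ is an admissible witness in the definition of \propname{product}. In particular, $\prod_{i=1}^n \mu|_i$ belongs to \propname{Product}.

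Next I would apply the definition of being $\eps$-far from a property: $\mu$ is $\eps$-far from \propname{product} precisely when $\dtv(\mu,\nu) > \eps$ for \emph{every} $\nu \in \propname{Product}$. Specializing this universal statement to the particular product distribution $\nu = \prod_{i=1}^n \mu|_i$ immediately gives $\dtv\bigl(\mu,\prod_{i=1}^n \mu|_i\bigr) > \eps$, which is the claim.

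There is no real obstacle and no computation involved; the content of the observation is just the identification of a specific, canonical product distribution against which $\mu$ can be compared. Its usefulness in what follows is precisely this concreteness: to reduce \propname{product} testing to \propname{equivalence} testing, one needs a named product distribution to feed into the equivalence tester, and the product of marginals $\prod_{i=1}^n \mu|_i$ is the natural choice because its marginals can be accessed by marginal (prefix) queries to $\mu$ itself.
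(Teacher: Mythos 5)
Your proof is correct and is exactly the intended (and essentially only) argument: the product of marginals is itself a product distribution, so the universally-quantified $\eps$-far condition specializes to it. The paper treats this as an observation without explicit proof, and your unpacking of the definitions matches what is implicit there.
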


In the binary setting, simulating a marginal prefix query to the product of marginals is pretty straightforward and can be done using one unconditional query to $\mu$, which is in particular a prefix query:
\begin{observation} \label{obs:simulation-of-marginal-of-binary-product}
    Let $\mu$ be a distribution over $\{0,1\}^n$, and let $\mu' = \prod_{i=1}^n \mu|_i$ be the product of $\mu$'s marginals. Then we can simulate every marginal prefix query to $\mu'$ using one unconditional sample from $\mu$ (and keeping its $i$th entry).
\end{observation}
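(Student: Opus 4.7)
The plan is to exploit independence: under the product distribution $\mu' = \prod_{i=1}^n \mu|_i$, conditioning on earlier coordinates leaves the distribution of the $i$th coordinate unchanged. This yields the identity $\mu'|_i^{x_{[i-1]} = w} = \mu'|_i = \mu|_i$ for every prefix $w \in \{0,1\}^{i-1}$. Once this identity is in hand, the simulation is immediate: draw $x \sim \mu$ unconditionally and return $x_i$, which by definition of the marginal is distributed exactly as $\mu|_i$ and therefore matches the required output distribution of the marginal prefix query to $\mu'$.

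The only step that needs verification is the identity above. I would derive it by a one-line calculation: writing $\mu'$ as the product $\prod_{j=1}^n \mu|_j(x_j)$, the numerator $\Pr_{\mu'}[x_i = b \wedge x_{[i-1]} = w]$ and the denominator $\Pr_{\mu'}[x_{[i-1]} = w]$ of the defining conditional probability share the factors $\mu|_j(w_j)$ for $j < i$, which cancel, leaving $\mu|_i(b)$. This is exactly the defining independence property of product distributions, applied to the first $i$ coordinates.

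There is essentially no obstacle; the observation is a one-line consequence of the product structure of $\mu'$. The only feature worth emphasizing in the write-up is that the prefix $w$ supplied by the caller is \emph{discarded} by the simulator, which is what lets a single unconditional sample from $\mu$ answer \emph{every} marginal prefix query irrespective of $w$. The zero-probability convention of the oracle needs no extra handling, since on the one hand the simulator samples from $\mu$ rather than from the conditional distribution, and on the other hand whenever $\Pr_{\mu'}[x_{[i-1]} = w] = 0$ the oracle's behavior is unconstrained.
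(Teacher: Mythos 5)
Your proposal is correct and matches the paper's intent: the paper states this observation without a formal proof, treating it as the immediate consequence of the product structure that you spell out (conditioning on a prefix does not change the $i$th marginal of $\mu'$, which equals $\mu|_i$, so one unconditional sample from $\mu$ projected onto coordinate $i$ suffices). Your added remarks about discarding $w$ and the zero-probability convention are accurate and do not change the argument.
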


In the general setting, we need the stronger subcube access. The reason is that when taking a prefix marginal query at index $i$ from the binary representation of $\mu' = \prod_{j=1}^n \mu|_j$, it may be that that $i$ is ``in the middle'' of the $\lceil\log |\Omega_j|\rceil$-bit representation of $j$th coordinate of $\mu'$, and then this query must translate to a non-trivial set $A_j\subseteq\Omega_j$ when simulating it using query access to $\mu$.

\begin{observation} \label{obs:simulation-of-marginal-of-product}
    Let $\mu$ be a distribution over $\prod_{i=1}^n \Omega_i$, and let $\mu' = \prod_{i=1}^n \mu|_i$ be the product of $\mu$'s marginals. Then we can simulate every marginal prefix query to $\mu'$ using one subcube query to $\mu$.
\end{observation}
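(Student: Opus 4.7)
The plan is to decompose a marginal prefix query to the binary form of $\mu'$ into fixed values on all coordinates strictly before some coordinate $j$, a partial constraint on coordinate $j$ itself, and a request for one bit of the $f_j$-encoding of coordinate $j$. Using the product structure of $\mu'$, the fixed values on the earlier coordinates become vacuous, so only the local constraint on coordinate $j$ needs to be realized, and that can be done with a single subcube query to $\mu$.

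In detail, fix the encodings $f_1,\ldots,f_n$ from Lemma \ref{lemma:simulation-of-binary-form} and consider a marginal prefix query to the binary form of $\mu'$ at bit index $k$, with forced prefix $u \in \{0,1\}^{k-1}$. Writing $k-1 = \sum_{i<j}\ceil{\log_2|\Omega_i|} + \ell$ uniquely with $0 \le \ell < \ceil{\log_2|\Omega_j|}$, the prefix $u$ splits into complete encodings $f_1(a_1),\ldots,f_{j-1}(a_{j-1})$ for unique $a_i\in\Omega_i$ (strings outside the image of $f_i$ correspond to zero-probability events and may be handled arbitrarily) and an $\ell$-bit prefix $p$ of $f_j(\cdot)$. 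Let $A_j = \{z \in \Omega_j : f_j(z) \text{ begins with } p\}$. The query requests the $(\ell+1)$-st bit of $f_j(z)$ where $z$ is drawn from the $j$-th marginal of $\mu'$ under this prefix condition, and since $\mu' = \prod_i \mu|_i$ is a product, this distribution is exactly $\mu|_j$ conditioned on $z \in A_j$, regardless of the values $a_1,\ldots,a_{j-1}$ imposed on earlier coordinates.

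To simulate the query, I would issue the single subcube query $\mu|^{\prod_i B_i}$ with $B_j = A_j$ and $B_i = \Omega_i$ for $i \ne j$, take the $j$-th coordinate $x_j$ of the resulting sample, and output the $(\ell+1)$-st bit of $f_j(x_j)$. The only nontrivial step is the bookkeeping that identifies $(j,\ell)$ from $k$ and decodes $u$ accordingly; correctness is then immediate from the product structure of $\mu'$. This also clarifies why subcube access (rather than prefix access or an unconditional sample of $\mu$) is needed in the non-binary case: when $p$ is a strict proper prefix of $f_j$-encodings, $A_j$ can be an arbitrary non-singleton subset of $\Omega_j$, so coordinate $j$ must be restricted to $A_j$ while the remaining coordinates are left completely free.
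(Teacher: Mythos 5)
Your argument is correct and captures exactly the key idea the paper relies on: because $\mu'$ is a product distribution, the conditions imposed on coordinates $1,\ldots,j-1$ are vacuous for the marginal at the break-off coordinate $j$, so one only needs to realize the single constraint $A_j \subseteq \Omega_j$, and a subcube query $\mu|^{\prod_i B_i}$ with $B_j = A_j$ and $B_i = \Omega_i$ for $i\neq j$ does this with one query. The paper states the observation without proof, so there is no official argument to match against; your bookkeeping ($k-1 = \sum_{i<j}\lceil\log_2|\Omega_i|\rceil + \ell$, $A_j$ the preimage of the $\ell$-bit prefix $p$, output the $(\ell+1)$-st bit of $f_j(x_j)$) is consistent with the construction in Lemma \ref{lemma:simulation-of-binary-form}.

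One small point of framing. As the observation is invoked (in the proof of Theorem \ref{th:ubnd-product}, which feeds $(\mu',\mu)$ into Theorem \ref{th:closeness-omegas}), a ``marginal prefix query to $\mu'$'' means the \emph{non-binary} marginal prefix oracle of Section \ref{sec:closeness-omegas}: it is given an index $j$, fixed values $a_1,\ldots,a_{j-1}$, and a subset $A\subseteq\Omega_j$, and the output is $\mu'|_j^{\{x:x_j\in A\wedge x_1=a_1,\ldots,x_{j-1}=a_{j-1}\}}$. At that level the argument is one line: by the product structure this equals $\mu|_j^{x_j\in A}$, which is one subcube query to $\mu$. The translation from binary bit-index $k$ to a subset $A_j$ is already performed inside Lemma \ref{lemma:simulation-of-binary-form}, so you are partly re-deriving that lemma. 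This does no harm, and it usefully explains the paper's remark about why a nontrivial $A_j\subseteq\Omega_j$ arises (and hence why subcube rather than prefix access to $\mu$ is needed), but if you want the shortest route, state the argument at the non-binary level and let Lemma \ref{lemma:simulation-of-binary-form} do the bit-level decomposition.
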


At this point, we can prove Theorem \ref{th:ubnd-product}.
\thZubndZproduct
\begin{proof}
    Let $\mu' = \prod_{i=1}^n \mu|_i$ be the product of $\mu$'s marginals. If $\mu$ is a product distribution then $\mu' = \mu$, and if $\mu$ is $\eps$-far from every product distribution, then in particular $\dtv(\mu, \mu') > \eps$.

    Since $\mu$ is accessible through the subcube oracle, we can use Observation \ref{obs:simulation-of-marginal-of-product} to simulate every marginal prefix query to $\mu'$ at the cost of one subcube query to $\mu$. If $|\Omega_i| = 2$ for every $1 \le i \le n$, then we can use an unconditional sample instead of a subcube query by Observation \ref{obs:simulation-of-marginal-of-binary-product}.

    Hence, we can reduce the $\eps$-test of $\mu$ for \propname{product} to an $\eps$-test of the equivalence of $\mu'$ and $\mu$, which we perform using Theorem \ref{th:closeness-omegas}. As noted above, this produces subcube queries for general alphabets, and only prefix queries for the binary setting.
\end{proof}

\bibliographystyle{alpha}
\bibliography{main}

\newcommand{\etalchar}[1]{$^{#1}$}
\begin{thebibliography}{CFGM13}

\bibitem[ADK15]{ADK15}
Jayadev Acharya, Constantinos Daskalakis, and Gautam Kamath.
\newblock Optimal testing for properties of distributions.
\newblock {\em Advances in Neural Information Processing Systems}, 28, 2015.

\bibitem[BC17]{bc17}
Rishiraj Bhattacharyya and Sourav Chakraborty.
\newblock Property testing of joint distributions using conditional samples.
\newblock {\em CoRR}, abs/1702.01454, 2017.

\bibitem[BC{\v{S}}V23]{horrible}
Antonio Blanca, Zongchen Chen, Daniel {\v{S}}tefankovi{\v{c}}, and Eric Vigoda.
\newblock Complexity of high-dimensional identity testing with coordinate conditional sampling.
\newblock In {\em The Thirty Sixth Annual Conference on Learning Theory}, pages 1774--1790. PMLR, 2023.

\bibitem[BFF{\etalchar{+}}01]{batuFFKRW2001}
Tugkan Batu, Eldar Fischer, Lance Fortnow, Ravi Kumar, Ronitt Rubinfeld, and Patrick White.
\newblock Testing random variables for independence and identity.
\newblock In {\em Proceedings 42nd IEEE Symposium on Foundations of Computer Science}, pages 442--451. IEEE, 2001.

\bibitem[BFR{\etalchar{+}}00]{batu-FRSW2000}
Tugkan Batu, Lance Fortnow, Ronitt Rubinfeld, Warren~D Smith, and Patrick White.
\newblock Testing that distributions are close.
\newblock In {\em Proceedings 41st Annual Symposium on Foundations of Computer Science}, pages 259--269. IEEE, 2000.

\bibitem[BGKV21]{bhattacharyya2021testing}
Arnab Bhattacharyya, Sutanu Gayen, Saravanan Kandasamy, and NV~Vinodchandran.
\newblock Testing product distributions: A closer look.
\newblock In {\em Algorithmic Learning Theory}, pages 367--396. PMLR, 2021.

\bibitem[BLR93]{blr1993}
Manuel Blum, Michael Luby, and Ronitt Rubinfeld.
\newblock Self-testing/correcting with applications to numerical problems.
\newblock {\em J. Comput. Syst. Sci.}, 47(3):549--595, 1993.

\bibitem[CCK{\etalchar{+}}21]{canonne2021random}
Cl{\'e}ment~L Canonne, Xi~Chen, Gautam Kamath, Amit Levi, and Erik Waingarten.
\newblock Random restrictions of high dimensional distributions and uniformity testing with subcube conditioning.
\newblock In {\em Proceedings of the 2021 ACM-SIAM Symposium on Discrete Algorithms (SODA)}, pages 321--336. SIAM, 2021.

\bibitem[CDKS17]{baynets17}
Cl{\'e}ment~L Canonne, Ilias Diakonikolas, Daniel~M Kane, and Alistair Stewart.
\newblock Testing bayesian networks.
\newblock In {\em Conference on Learning Theory}, pages 370--448. PMLR, 2017.

\bibitem[CFGM13]{chakraborty2013power}
Sourav Chakraborty, Eldar Fischer, Yonatan Goldhirsh, and Arie Matsliah.
\newblock On the power of conditional samples in distribution testing.
\newblock In {\em Proceedings of the 4th conference on Innovations in Theoretical Computer Science}, pages 561--580, 2013.

\bibitem[CJLW21]{chen2021learning}
Xi~Chen, Rajesh Jayaram, Amit Levi, and Erik Waingarten.
\newblock Learning and testing junta distributions with sub cube conditioning.
\newblock In {\em Conference on Learning Theory}, pages 1060--1113. PMLR, 2021.

\bibitem[CM24]{chen2024uniformity}
Xi~Chen and Cassandra Marcussen.
\newblock Uniformity testing over hypergrids with subcube conditioning.
\newblock In {\em Proceedings of the 2024 Annual ACM-SIAM Symposium on Discrete Algorithms (SODA)}, pages 4338--4370. SIAM, 2024.

\bibitem[CRS15]{crs15}
Cl{\'e}ment~L Canonne, Dana Ron, and Rocco~A Servedio.
\newblock Testing probability distributions using conditional samples.
\newblock {\em SIAM Journal on Computing}, 44(3):540--616, 2015.

\bibitem[DK16]{DK16}
Ilias Diakonikolas and Daniel~M Kane.
\newblock A new approach for testing properties of discrete distributions.
\newblock In {\em 2016 IEEE 57th Annual Symposium on Foundations of Computer Science (FOCS)}, pages 685--694. IEEE, 2016.

\bibitem[DKW18]{DKW18}
Constantinos Daskalakis, Gautam Kamath, and John Wright.
\newblock Which distribution distances are sublinearly testable?
\newblock In {\em Proceedings of the Twenty-Ninth Annual ACM-SIAM Symposium on Discrete Algorithms}, pages 2747--2764. SIAM, 2018.

\bibitem[Gol17]{goldreich2017introduction}
Oded Goldreich.
\newblock {\em Introduction to property testing}.
\newblock Cambridge University Press, 2017.

\bibitem[GR00]{gr11asGR00}
Oded Goldreich and Dana Ron.
\newblock On testing expansion in bounded-degree graphs.
\newblock {\em Electron. Colloquium Comput. Complex.}, {TR00}, 2000.

\bibitem[GR11]{gr11}
Oded Goldreich and Dana Ron.
\newblock On testing expansion in bounded-degree graphs.
\newblock {\em Studies in Complexity and Cryptography. Miscellanea on the Interplay between Randomness and Computation}, pages 68--75, 2011.

\bibitem[GR23]{gr22}
Oded Goldreich and Dana Ron.
\newblock Testing distributions of huge objects.
\newblock {\em TheoretiCS}, 2, 2023.

\bibitem[Lev85]{levin1985one}
Leonid~A Levin.
\newblock One-way functions and pseudorandom generators.
\newblock In {\em Proceedings of the seventeenth annual ACM symposium on Theory of computing}, pages 363--365, 1985.

\bibitem[Pan08]{paninski2008coincidence}
Liam Paninski.
\newblock A coincidence-based test for uniformity given very sparsely sampled discrete data.
\newblock {\em IEEE Transactions on Information Theory}, 54(10):4750--4755, 2008.

\bibitem[RS96]{rs1996}
Ronitt Rubinfeld and Madhu Sudan.
\newblock Robust characterizations of polynomials with applications to program testing.
\newblock {\em SIAM Journal on Computing}, 25(2):252--271, 1996.

\end{thebibliography}

\appendix

\section{Explicit algorithms}
We provide here explicit representations for the binary setting algorithms constructed in this paper.

\paragraph{Levin's work-balance method}
For a random variable $X$ that is bounded between $0$ and $1$, it distinguishes between $\E[X] = 0$ and $\E[X] > \eps$. To do that, we use another random variable $Y$, for which we can distinguish between $\E[X | Y=y] = 0$ and $\E[X | Y=y] > \rho$ for every $0 < \rho < 1$, at a cost of $O(1 / \rho)$. Overall, the cost of the algorithm is $O(\log^2 \eps^{-1} / \eps)$. The original construction is found in \cite{levin1985one} and the following optimized version appears as an exercise in \cite{goldreich2017introduction}.

The black box is run a logarithmic number of times for every $y$ that we draw (as opposed to once in, for example, \cite[Exercise 8.4]{goldreich2017introduction}) since it refers here to a procedure with two-sided error.

\begin{algorithm}[H]
    \caption{Levin's work-balance producedure}
    \label{alg:levin}
    \begin{algorithmic}
        \State \textbf{input} $Y$ -- a random variable, accessible through unconditional sampling.
        \State \textbf{input} $\eps$ -- a threshold parameter.
        \State \textbf{input} A random black box that, for every $y \in \supp(Y)$ and $0 < \eps' < 1$:
        \State \phantom{\textbf{input}} \textbf{completeness} if $\E[X|Y=y] = 0$, it accepts with at least probability $2/3$.
        \State \phantom{\textbf{input}} \textbf{soundness} if $\E[X|Y=y] > \eps'$, it rejects with probability at least $2/3$.
        \State \phantom{\textbf{input}} \textbf{resource cost complexity} $O(1/\eps')$.
        \State \textbf{completeness} If $\E[X] = 0$, then the output is \accept with probability at least $2/3$.
        \State \textbf{soundness} If $\E[X] > \eps$, then the output is \reject with probability at least $2/3$.
        \State \textbf{resource cost complexity} $O(\log^2 \eps^{-1} / \eps)$.
        \For{$t$ from $1$ to $\ceil{\log_2 (2/\eps)}$}
            \State \textbf{let} $\eps' \gets 2^{-t}$.
            \For{$\ceil{2^{3-t} \eps^{-1}}$ \textbf{times}}
                \State \textbf{draw} $y \sim Y$.
                \State \textbf{set} $r \gets 0$.
                \For{$\ceil{64 (\log_2 \eps^{-1} + 2)}$ \textbf{times}}
                    \State \textbf{run} the black box with $(y, \eps')$.
                    \If{the black box accepts}
                        \State \textbf{set} $r \gets r + 1$.
                    \Else
                        \State \textbf{set} $r \gets r - 1$.
                    \EndIf
                \EndFor
                \If{$r < 0$}
                    \State \Return \reject.
                \EndIf
            \EndFor
        \EndFor
        \State \Return \accept.
    \end{algorithmic}
\end{algorithm}

\paragraph{$\chi^2$-test of single-bit distributions}
The following is the algorithm that detects the difference between two Bernoulli distributions as it was described in Lemma \ref{lemma:single-bit-chi-square-test}.

\begin{algorithm}[H]
    \caption{$\chi^2$-test of single-distributions}
    \label{alg:chi-square-single}
    \begin{algorithmic}
        \State \textbf{input} Two Bernoulli distributions $\Ber(p)$ and $\Ber(q)$, accessible through samples.
        \State \textbf{completeness} If $p = q$, then the algorithm accepts with probability at least $2/3$.
        \State \textbf{soundness} If $\chi^2(p, q)>\eps$, then the algorithm rejects with probability at least $2/3$.
        \State \textbf{complexity} $O(1/\eps)$ samples.
        \State \textbf{let} $N \gets \ceil{16 / \eps}$.
        \State \textbf{set} $A,B \gets 0$.
        \For{$64$ \textbf{times}}
            \State \textbf{draw} $N$ independent samples from $\Ber(p)$, let $X$ be their sum.
            \State \textbf{draw} $N$ independent samples from $\Ber(q)$, let $Y$ be their sum.
            \If{$X > Y$}
                \State \textbf{set} $A \gets A + 1$.
            \EndIf
            \If{$X < Y$}
                \State \textbf{set} $B \gets B + 1$.
            \EndIf
        \EndFor
        \If{$A \le 40$ and $B \le 40$}
            \State \Return \accept.
        \Else
            \State \Return \reject.
        \EndIf
    \end{algorithmic}
\end{algorithm}

\paragraph{Testing \propname{equivalence}}
The proof of Theorem \ref{th:closeness-linearithmic} translates to the following explicit algorithm.

\begin{algorithm}[H]
    \caption{$\eps$-test for binary-alphabet \propname{equivalence}}
    \label{alg:equivalence-linearithmic}
    \begin{algorithmic}
        \State \textbf{input} $n$, $\eps > 0$, two distributions $\mu$, $\tau$ over $\{0,1\}^n$.
        \State \phantom{\textbf{input}} $\mu$ is accessible through the marginal prefix oracle.
        \State \phantom{\textbf{input}} $\tau$ is accessible through the prefix oracle.
        \State \textbf{completeness} If $\tau = \mu$, then the output is \accept with probability at least $\frac{2}{3}$.
        \State \textbf{soundness} If $\dtv(\tau, \mu) > \eps$, then the output is \reject with probability at least $\frac{2}{3}$.
        \State \textbf{let} $\pi$ be the uniform distribution over $[n]$.
        \State \textbf{let} $Y$ be a random variable that distributes as $\pi\times\tau$.
        \State \textbf{let} $X$ be a random variable defined as a function of $Y=(i,w)$:
        \[X(i,w)=\chi^2\left(\mu|_i^{x_{[i-1]} = w_{[i-1]}}(1), \tau|_i^{x_{[i-1]} = w_{[i-1]}}(1)\right)\]
        \State \textbf{let} $\rho \gets \frac{\eps^2}{24 \log(\eps/2n)}$.
        \State \textbf{run} Levin's procedure (Algorithm \ref{alg:levin}), where its input consists of $Y$, $\rho/n$, and the black box $((i,w),\eps') \to (\text{Algorithm \ref{alg:chi-square-single} with input $\mu|_i^{x_{[i-1]} = w_{[i-1]}}$, $\tau|_i^{x_{[i-1]} = w_{[i-1]}}$ and $\eps'$})$.
        \If{Levin's procedure accepts}
            \State \Return \accept.
        \Else
            \State \Return \reject.
        \EndIf
    \end{algorithmic}
\end{algorithm}

\paragraph{Testing \propname{product}}
The proof of Theorem \ref{th:ubnd-product} translates in the binary setting to the following explicit algorithm. Note that it is almost identical to Algorithm \ref{alg:equivalence-linearithmic}, since we only substitute the marginal prefix oracle $\tau|_i^{x_{[i-1]} = w_{[i-1]}}$ with the marginal oracle of $\mu$.

\begin{algorithm}[H]
    \caption{$\eps$-test for binary-alphabet \propname{product}}
    \label{alg:binary-productness-linearithmic}
    \begin{algorithmic}
        \State \textbf{input} $n$, $\eps > 0$, a distribution $\mu$ over $\{0,1\}^n$.
        \State \phantom{\textbf{input}} $\mu$ is accessible through the prefix oracle.
        \State \textbf{completeness} If $\mu \in \propname{product}$, then the output is \accept with probability at least $\frac{2}{3}$.
        \State \textbf{soundness} If $\min\limits_{\tau \in \propname{product}} \!\!\dtv(\mu,\tau) \!>\! \eps$, then the output is \reject with probability at least $\frac{2}{3}$.
        \State \textbf{let} $\pi$ be the uniform distribution over $[n]$.
        \State \textbf{let} $Y$ be a random variable that distributes as $\pi\times\mu$.
        \State \textbf{let} $X$ be a random variable defined as a function of $Y=(i,w)$:
        \[X(i,w)=\chi^2\left(\mu|_i^{x_{[i-1]} = w_{[i-1]}}(1), \mu|_i(1)\right)\]
        \State \textbf{let} $\rho \gets \frac{\eps^2}{24 \log(\eps/2n)}$.
        \State \textbf{run} Levin's procedure (Algorithm \ref{alg:levin}), where its input consists of $Y$, $\rho/n$, and the black box $((i,w),\eps') \to (\text{Algorithm \ref{alg:chi-square-single} with input $\mu|_i^{x_{[i-1]} = w_{[i-1]}}, \mu|_i$ and $\eps'$})$.
        \If{Levin's procedure accepts}
            \State \Return \accept.
        \Else
            \State \Return \reject.
        \EndIf
    \end{algorithmic}
\end{algorithm}

\end{document}